\def\IState{\State\hspace{10pt}}
\def\IIState{\State\hspace{20pt}}
\def\IIIState{\State\hspace{30pt}}
\newtheorem{theorem}{Theorem}
\newtheorem{definition}{Definition}[section]
\newtheorem{lemm}{Lemma}
\let\olditemize\itemize
\renewcommand\itemize{\olditemize\setlength\itemsep{1pt}\parskip0pt}
\let\oldenumerate\enumerate
\renewcommand\enumerate{\oldenumerate\setlength\itemsep{1pt}\parskip0pt}
\newcommand*\samethanks[1][\value{footnote}]{\footnotemark[#1]}
\begin{document}
	
	\title{AToM: Active Topology Monitoring for the Bitcoin Peer-to-Peer Network
	}
	
		
	
	\author[]{Federico Franzoni\thanks{These authors were supported by Project RTI2018-102112-B-I00 (AEI/FEDER,UE).}} 
	\author[]{Xavier Salleras}
	\author[]{Vanesa Daza\samethanks[1]} %
	\affil[]{Universitat Pompeu Fabra, Barcelona, Spain}
	\affil[ ]{\texttt{\{federico.franzoni,xavier.salleras,vanesa.daza\}@upf.edu}}

	\date{}
	
	\maketitle
	
	\begin{abstract}
	Over the past decade, the Bitcoin P2P network protocol has become a reference model for all modern cryptocurrencies.
	While nodes in this network are known, the connections among them are kept hidden, as it is commonly believed that this helps protect from deanonymization and low-level attacks.
	However, adversaries can bypass this limitation by inferring connections through side channels.
	At the same time, the lack of topology information hinders the analysis of the network, which is essential to improve efficiency and security.
	In this paper, we thoroughly review network-level attacks and empirically show that topology obfuscation is not an effective countermeasure. 
	We then argue that the benefits of an open topology potentially outweigh its risks, and propose a protocol to reliably infer and monitor connections among reachable nodes of the Bitcoin network.
	We formally analyze our protocol and experimentally evaluate its accuracy in both trusted and untrusted settings.
	Results show our system has a low impact on the network, and has
	precision and recall are over 90\% with up to 20\% of malicious nodes in the network.
	\end{abstract}

\section{Introduction}
\label{sec:introduction}
Since its release, Bitcoin~\cite{nakamoto2008bitcoin} has attracted a constantly increasing number of users, who exchange large amounts of money every day~\cite{coinmarketcap}.
Given its relevance, the security aspects of this cryptocurrency system have been extensively studied in research~\cite{conti2018survey,tapsell2018evaluation,reid2013analysis}.
The network layer, among other components, has received a lot of attention due to its key role in the communication between the involved actors. 
In particular, several practical attacks have been shown, such as  partitioning~\cite{neudecker2015simulation}, eclipse~\cite{heilman2015eclipse}, and deanonymization~\cite{biryukov2014deanonymisation,koshy2014analysis}, which can pose a serious threat to the 
security of single users or that the system as a whole.

With respect to such attacks, Bitcoin developers implicitly adopted the concealment of the network topology as a protective measure.
Specifically, while (reachable) nodes in the network are publicly known~\cite{bitnodes}, their connections are kept hidden.
This security-by-obscurity approach is meant to hinder adversaries from performing attacks like the ones we mentioned above.
However, 
several techniques have been shown over the years that allow bypassing this measure by exploiting side channels~\cite{miller2015discovering,neudecker2016timing} or abusing the protocol~\cite{delgado2018txprobe}.
While Bitcoin developers promptly fix the protocol to make newly-disclosed techniques ineffective, it is hard to prevent attackers from using undisclosed methods or devising new ones.
At the same time, hiding the topology hinders a proper analysis of the network~\cite{miller2015discovering,delgado2018txprobe} and an accurate definition of network models~\cite{neudecker2015simulation,neudecker2019characterization}.
As a consequence, these limitations ultimately hinder the improvement of the efficiency and security of the network.
In contrast, having a reliable source of topology information could enable the design of a safer and more performing network.

It is thus important to investigate whether concealing the topology is indeed a valid protection mechanism for the Bitcoin network, given the fact that topology information is not a threat per se and that no solid proof has been given to support the current approach.
In this paper, we empirically show the ineffectiveness of topology obfuscation as a defense from known network-level attacks, and foster the idea of a public topology.
In light of this, we argue that the benefits of an open topology potentially outweigh its risks, 
and propose a protocol to reliably infer and monitor connections among reachable nodes of the network.

We analyze the potential impact of our protocol on the Bitcoin network and experimentally evaluate its performance in a simulated environment, as well as its resilience to malicious nodes.
Results show that our system has little impact on the network and has high precision and recall, even with high concentrations of malicious nodes.
Furthermore, although designed for Bitcoin, our solution can be implemented on any P2P network.

The contributions of our paper include:
\vspace{-5pt}
\begin{itemize}
	\item we study the potential benefits of an open network topology for the Bitcoin P2P network;
	\item we show that most attacks mentioned in literature as aided by topology information are actually independent from such knowledge;
	\item we design a protocol to prove connections among peers, and propose a system to monitor the topology of the reachable network; 
	\item we employ a reputation scheme to discourage malicious nodes from misbehaving and, at the same time, decentralize trust in the monitoring nodes;
	\item we implement a proof of concept and evaluate the accuracy of our solution through simulations.
\end{itemize}

\section{The Bitcoin P2P Network}
\label{sec:background}
The Bitcoin P2P network is the core infrastructure through which clients exchange transactions and reach consensus on the contents of the blockchain.
New nodes join the network by connecting to well-known nodes or querying a trusted DNS server.
After establishing the first connections, nodes discover new peers by receiving \texttt{ADDR} messages from their current neighbors. 
These messages advertise other known nodes in the network to which it is possible to connect.

Each node connects to some peers and, when possible, accepts connections from other nodes.
Relatively to a node, connections are called \textit{outbound} if initiated by the node itself, and \textit{inbound} otherwise.
Nodes running the reference client (Bitcoin Core)~\cite{bitcoincore}, which is the most used in the network~\cite{bitnodes}, always keep 8 outbound connections, and accept inbound connections up to a maximum of 125 peers~\cite{bitcoinprotocol}. 
Nevertheless, this limitation is not enforced by the protocol, leaving nodes free to establish any number of connections.

As some clients, due to the presence of NAT or firewall, are unable to accept incoming connections, nodes in the network are typically divided into \textit{reachable} and \textit{unreachable}~\cite{delgado2018cryptocurrency}.
According to research, unreachable nodes are around ten times more than reachable nodes, thus counting for almost 90\% of the whole network~\cite{wang2017towards,delgado2018txprobe}.
Despite that, most research focuses exclusively on the reachable part of the network~\cite{donet2014bitcoin,miller2015discovering}, as this is easier to analyze, and is considered to be more relevant for the propagation of messages since that they maintain the vast majority of connections~\cite{decker2013information}.
Similarly, in this work, we will only focus on reachable nodes.

\section{Motivation}
\label{sec:motivation}
The relevance of topology for blockchain networks has been clearly shown in the literature.

Kiayias et al.\cite{kiayias2015speed} show how the efficiency of the information propagation is directly influenced by the network topology.
Propagation delay can also be substantially affected by the number of connections that nodes have~\cite{essaid2018network}.
Furthermore, all unstructured P2P networks, like blockchain ones, are known to suffer from the so-called \textit{topology mismatch} problem (i.e., the incoherence between logical and physical links), which causes inefficiency in the transmission of data~\cite{zhao2006solving,liu2007building}.

As for what concerns security, it has been shown how the Bitcoin network topology is far from being a random graph as designed.
Instead, it contains nodes communities~\cite{delgado2018txprobe} and supernode~\cite{miller2015discovering}, showing high levels of centralization.
Knowledge of the network topology could help spot such issues and improve decentralization.
An open topology could also boost the propagation of transactions and blocks, which in turn would reduce the ability to perform double-spending attacks and selfish mining \cite{dotan2020sok}.

If nodes had access to topology information, all these aspects could be addressed in real-time.
Furthermore, the availability of such information could help reduce discrepancies between formal models and the real network~\cite{kiayias2015speed}.

As stated by Delgado et al.~\cite{delgado2018txprobe}, hiding the topology prevents network analysis and measurement, further complicating the solution of existing issues.
Similarly, Miller et al.~\cite{miller2015discovering} point out that understanding the topology allows identifying structural faults in the network that might hinder broadcast optimization.
For this reason, they support the idea that monitoring the network can help quickly detect and react to attacks and mistakes.
Authors of \cite{deshpande2018btcmap} also state it is crucial to acquire the knowledge of topology to accurately manage the network, optimize its performances, and ensure that it works properly.

An open Bitcoin topology could allow the introduction of mechanisms for avoiding centralization and increasing connectivity among nodes, as well as detecting weak spots that can be exploited to perform network-level attacks.
With this work, we aim at taking a first step towards this direction.
We do this by showing that an open topology should not be considered a security concern, and by proposing a practical protocol to reliably monitor the state of the network.

\section{Security Concerns of Open Topology}
\label{sec:topologysecurity}
According to research, the main reason to keep the topology hidden is to avoid the risk of network-level attacks and deanonymization~\cite{miller2015discovering,grundmann2019exploiting,delgado2018txprobe}.
In this section, we analyze known threats and evaluate their relation to topology knowledge.

\subsection{Network-level Attacks}
Network attacks commonly related to topology information include double-spending, selfish mining, partitioning, and eclipse attacks.
In the following, we study each of these attacks. 

\paragraph{\textbf{Double Spending Attacks}}
Double spending in fast payment transactions~\cite{karame2012doublespending} was one of the first attacks correlated with topology. 
In this attack, the adversary sends a transaction to a victim node, while sending a conflicting one to the rest of the network.
The goal is to make the victim accept a 0-confirmation transaction as payment, while having a double-spending one added to the blockchain.
Note that, despite being cited by many research papers, the attack described in \cite{karame2012doublespending} does not make use of any topological information beyond the IP address of the victim, which is publicly available.
Moreover, as the probability of success decreases exponentially in the number of peers, the victim could effectively protect himself by establishing few more connections.

Knowledge of the topology might indeed ease a double-spending attack when the victim is only connected to few, reachable, peers~\cite{lei2015exploiting}.
However, this situation is typical of unreachable nodes, which would not be affected by revealing the topology of reachable nodes.
Furthermore, the time frame in which this attack can succeed is so short that the victim only needs to wait a couple of seconds to be safe.

\paragraph{\textbf{Mining Attacks}}
Mining attacks~\cite{eyal2014majority,nayak2016stubborn} could also benefit from the knowledge of the topology.
In particular, miners could take advantage of network information to improve the propagation of their own blocks, or to slow down competing ones~\cite{miller2015discovering,nayak2016stubborn}.
However, this is true only when a fraction of miners have such knowledge.
Instead, if all miners had access to topology information, they could all leverage highly connected nodes to speed up block propagation, with no advantage for a particular subset.

On the other hand, the drastic improvement in block propagation speed \cite{dsn-block-propagation} likely made topology information less relevant for this kind of attacks.
Moreover, nowadays, miners usually connect to each other through high-speed relay networks~\cite{bitcoinfibre,falconnet}, which are separate from the main network, thus further reducing the importance of its topology. 

\paragraph{\textbf{Partitioning Attacks}}
Partitioning attacks~\cite{neudecker2015simulation,saad2019partitioning} are also a commonly cited problem in relation to topology knowledge.
These attacks consist in trying to split the network to prevent communication between isolated groups of nodes.
The goal of the attacker can be to double-spend or simply to affect the functionality of the network and create distrust in the system.
Depending on the goal, the attacker can follow different strategies.
To divide the network in two she tries to detect the minimum vertex cut, that is, the smallest set of nodes whose removal causes a split in the network, and disrupt their communications with a DoS attack.
To disconnect a specific portion of the network, she tries to disrupt all the connections between the target nodes and the rest of the network. 

In both situations, knowing the topology would indeed make the attack easier.
However, the reference study on the topic \cite{neudecker2015simulation} shows how the network can resist attacks lasting several hours even against a powerful adversary controlling a botnet as large as the Bitcoin network itself.
Moreover, for the attack to be effective, both reachable and unreachable nodes have to be considered.
For instance, let us consider a public topology containing a cut $C$ between portions $P1$ and $P2$ of the network. 
By looking at it the attacker might think that taking down $C$, $P1$ and $P2$ would be disconnected (i.e., partitioned); however, the attacker does not see any of the unreachable nodes that might be connected to both $P1$ and $P2$.
Therefore, knowing the topology of the reachable network is not sufficient to calculate a complete cut.

Note that actively monitoring the network graph would also provide a double benefit for protecting from these attacks.
On the one hand, it would help detect attacks in real-time, allowing the network to react promptly.
On the other hand, using an adaptive topology-aware protocol, it would be possible to maximize the number of nodes in the minimum vertex cut (of the reachable network) to make the attack harder.

Finally, it is worth mentioning that, as pointed out in~\cite{delgado2018cryptocurrency}, partitioning can accidentally occur in random P2P networks if the network formation process is not well designed.
This is one of the reasons why Bitcoin makes use of trusted DNS servers and hard-coded peers for the bootstrap procedure.
Again, it is easy to show that monitoring the public topology could help prevent such an event.

\paragraph{\textbf{Routing Attacks}}
Similarly to partitioning attacks, routing attacks \cite{apostolaki2017hijacking,tran2020stealthier} aim at 
isolating a portion of nodes at the AS-level by means of BGP hijacking.
These attacks are also commonly cited to justify the obfuscation of the topology.

However, in these attacks, the adversary gains advantage from being in control of an AS, and only needs knowledge of the target IP addresses, with no need of further topology information.
Furthermore, an AS-level adversary already has knowledge of all connections under its control, which represents a potential advantage over network-level nodes.
Making the topology public to the whole network would effectively reduce this advantage.

\paragraph{\textbf{Eclipse Attacks}}
Another attack typically mentioned in relation to network topology is the eclipse attack~\cite{heilman2015eclipse}.
This attack is aimed at a single node and consists in replacing all the peers of the victim with others controlled by the attacker.
Eclipsing can be used to perform other attacks, such as censorship, where the attacker drops transactions coming from the victim, or double spends.
By eclipsing a miner, an attacker could also perform selfish mining or even increase the chances of a 51\% attack~\cite{eyal2014majority}.

Despite being cited by virtually all topology-related papers as one of the attacks that justifies hiding the topology, the eclipse attack does not make use of any topology information besides the IP address of the victim.
In fact, this attack is mostly related to the address management and the mechanism used to establish new outbound connections.
Furthermore, a number of defensive mechanisms have been introduced in the Bitcoin reference client to avoid these attacks, such a the limitations in the number of outgoing connections to the same subnet or the randomization in the address selection procedure \footnote{See https://github.com/bitcoin/bitcoin/blob/v0.10.1/doc/release-notes.md .}.
Additionally, miners further protect themselves by running highly-connected gateway nodes~\cite{miller2015discovering} and propagating blocks via relay networks.

\subsection{Deanonymization}
The second major threat commonly linked to the knowledge of the network topology is transaction anonymity.
Several papers showed it is possible to deanonymize a specific transaction by detecting the node that generated it~\cite{biryukov2014deanonymisation,koshy2014analysis,fanti2017deanonimization}.

The basic idea, originally proposed by Kaminsky~\cite{kaminsky2011black}, consists in connecting to all reachable peers and monitoring incoming transactions. 
Intuitively, the first node to send a transaction will likely be the one that created it.
This attack only works against reachable peers, as the attacker needs to be connected to all possible sources.

By looking at known attacks, it can be observed that although topology information might improve the heuristics used to link transactions to their source node, it is hardly a requirement to deanonymize transactions.
The only exception to this is \cite{biryukov2014deanonymisation}, where unreachable nodes are deanonymized by identifying the set of their peers. 
However, knowing the topology of reachable nodes is irrelevant in this case.
Additionally, this technique has been made ineffective in the current protocol.
Apart from \cite{biryukov2014deanonymisation}, none of the state-of-the-art techniques use topology information, proving how these attacks do not depend on it.

In fact, as pointed out by Fanti et al.~\cite{fanti2017deanonimization}, the problem with anonymity lies in the transaction propagation algorithm, and should be addressed by adopting alternative relay protocols, such as Dandelion \cite{venkatakrishnan2017dandelion,fanti2018dandelion++} or \cite{franzoni2020improving}.

\section{Related work}
\label{sec:relatedwork}

To the best of our knowledge, there are no known algorithms to reliably compute the topology of a P2P network.
In fact, this information seems not to be as relevant to most such networks as it is for cryptocurrencies.
The closest-related works are on routing protocols~\cite{xu2003hieras,oliveira2005performance} and location-aware topology studies~\cite{rostami2007topology,liu2004location}.
However, none of these works explicitly address topology discovery.
Most notably, the only known topology-inferring techniques are those aimed at the Bitcoin network.
In this section, we briefly review all such techniques.

The first technique was proposed in \cite{biryukov2014deanonymisation} and allowed to determine the outbound connections of a node, based on the propagation of \texttt{addr} messages: as nodes advertise their own address when connecting to a new peer, and such messages are forwarded to other peers, it was possible to detect the target's peers by connecting to all nodes and analyzing received addresses.
Similarly, in \cite{miller2015discovering}, Miller et al. proposed a network-wide technique, called \textit{AddressProbe}, that allowed inferring connections among reachable nodes.
Their technique exploited the timestamp attached to each entry of \texttt{addr} messages: since the address of each outbound peer was updated every time a message was received, it was possible to determine which entries in the \texttt{addr} message corresponded to outbound peers by looking at their timestamp.
Both techniques were made ineffective by an update in the reference client \cite{nick2015guessing}. 

A more generic technique was proposed by Neudecker et al. \cite{neudecker2016timing}, which was based on the rumor centrality of gossip propagation.
Assuming the source of a specific message is known and that the adversary is connected to all nodes, it is possible to infer connection by observing the time at which nodes propagate the information.
Their technique was made ineffective by the switch in the spreading protocol from Trickle to Diffusion (\cite{fanti2017anonymity}).

In \cite{grundmann2018exploiting}, Grundmann et al. propose two different methods.
The first one exploits the fact that nodes accumulate transactions before announcing them to their peers.
In particular, an \texttt{inv} message contains all transactions received since the last \texttt{inv} message was sent. Based on this fact, the adversary creates marker transactions for all peers and observes \texttt{inv} messages to infer links.
This technique shows high precision (more than 90\%) but has very low recall (10\%), and is hardly practical in real life.
The second method targets a single node, and exploits the fact that clients do not relay double-spending transactions.
To infer the target's peers, the adversary sends a different double-spending transaction to each node, except the target.
Then she observes which transaction the target relays and deduces a link with the node to which that transaction was sent.
Although this technique has very high precision (97\%) and good recall (60\%), no countermeasures have been introduced to date.

The most recently disclosed technique for topology inferring is \textit{TxProbe} \cite{delgado2018txprobe}.
Similar to the previous example, this technique leverages marker messages, this time based on orphan transactions (i.e., transactions spending unknown inputs).
Despite its high precision and recall (more than 90\%), this method is rather invasive and can interfere with the ordinary transaction propagation.
Again, the technique was invalidated by a recent update in the reference client \footnote{See "Select orphan transaction uniformly for eviction" (https://github.com/bitcoin/bitcoin/pull/14626)}.

All mentioned techniques target reachable nodes, and require the adversary to connect to the whole network and observe data propagation either in a passive way or by actively introducing marker messages to infer communication links.
In Section \ref{sec:design}, we will leverage these concepts to design our topology-monitoring solution.

\section{The AToM Protocol}
\label{sec:design}
We propose a dynamic topology-monitoring algorithm for the Bitcoin P2P network called AToM (Active Topology Monitor).
In this section, we give an overview of our protocol, explaining its operating principles, and motivating our design choices.

\paragraph{\textbf{Notation}}
We denote the Bitcoin network as a directed graph $G{=}(V,E)$, 
where $V{=}\{N_1, N_2, ...\}$ and $E{=}\{(N,P){\text{ }{:}\text{ }}N{\to}P\}$.
A generic node is denoted by $N{=}addr_N$, where $addr_N$ is the IP address accepting incoming connections.
We denote the set of outbound peers of a node $N$ as $O_N$ and the set of inbound peers as $I_N$.
A generic peer of $N$ is indicated by $P{=}(addr_P, out)$, where $out$ is $true$ if $P{\in}ON$, and $false$ if $P{\in}I_N$.
We represent an outbound connection from $N$ to $P$ with the notation $N{\to}P$.

We use $M$ to denote a monitor, and $G_M{=}(V_M,E_M)$ to indicate its local topology snapshot.
Finally, we use $\Gamma$ to denote the set of valid monitors running in $G$.

\subsection{Protocol Overview}
The AToM protocol is run by a set of \textit{monitors} that connect to all reachable nodes.
The monitors continuously run a topology-inferring protocol to maintain an updated snapshot of the network.
An example of this scenario is depicted in Figure \ref{fig:pocnetwork}: a monitor connects to all reachable nodes, and run the AToM protocol to compute and maintain a continuously up-to-date state of the topology; links between unreachable nodes and reachable nodes are not included in the computed snapshot.

\begin{figure}[t]
	\centering
	\includegraphics[width=0.55\columnwidth]{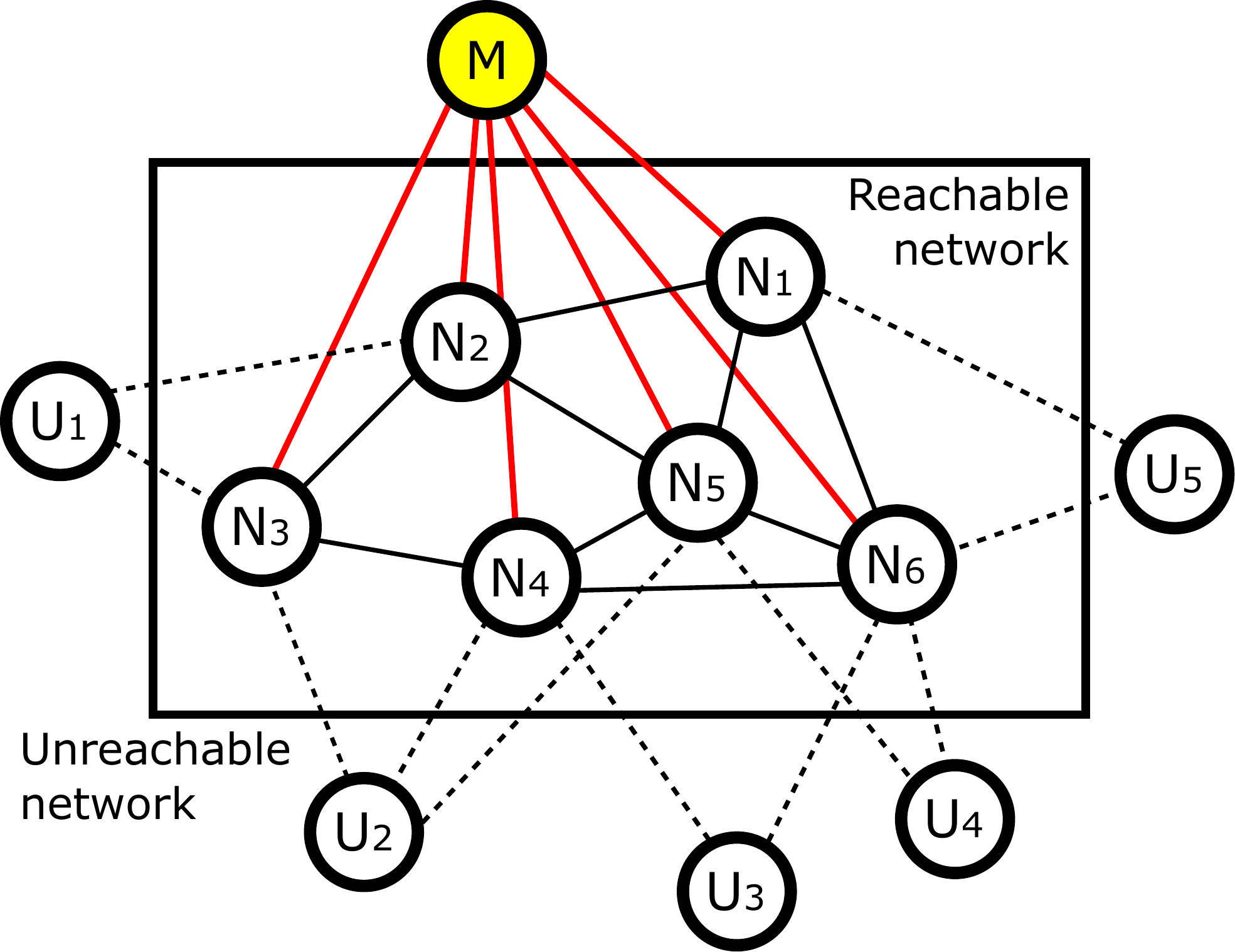}
	\caption[AToM: Scenario]{Our scenario: a monitor node ($M$) connects to all reachable nodes ($N$), excluding unreachable ones ($U$).}
	\label{fig:pocnetwork}
\end{figure}

We assume monitors know all public nodes. 
To that purpose, monitors can use a crawler or external services like Bitnodes \cite{bitnodes}.

\paragraph{\textbf{Scope}}
We limit the scope of the AToM protocol to the reachable part of the network.
There are different reasons behind this choice.

As mentioned in Section \ref{sec:background}, reachable nodes constitute the backbone of the network, since they maintain the vast majority of connections.
On the contrary, unreachable nodes are only marginally involved in data propagation \cite{wang2017towards}.
This arguably makes the reachable part the most important to protect and optimize.

From a security perspective,  
it is also safer to limit the public topology to reachable nodes, as this does not affect their protection from known attacks, as discussed in Section \ref{sec:topologysecurity}.
Instead, unreachable nodes might be more exposed to certain attacks if the adversary had access to this information.

Finally, there are several practical advantages in monitoring reachable nodes only.
First of all, as discussed in Section \ref{sec:background}, the reachable portion of the network is relatively small, and its nodes show more stability in terms of number and connections~\cite{bitnodes,statoshi-peers}.
This eases the monitoring task, which can better adapt to changes in the network.
Another advantage is the fact that is possible to guarantee the (almost) completeness of the snapshot computed by monitors.
In fact, it is virtually impossible to know (and connect to) all unreachable nodes, due to the inability to reach them from the Internet.
Because of this, malicious users could also fake the existence of unreachable nodes, without the monitors being able to verify them.
Instead, reachable nodes can be verified by simply opening a connection towards them and ensuring they run the Bitcoin protocol\footnote{Note that there is a one-to-one relation between reachable nodes and 'IP:port' tuples; in fact, while multiple nodes can run on the same device, only a single 'IP:port' address can be used for each running instance.}.

Note that, despite being out of scope, unreachable nodes could also make use of the AToM protocol by querying monitors a snapshot of the topology.
Specifically, each unreachable node can compute its relative snapshot by adding its own connections to the queried snapshot.
In fact, as these nodes only maintain outbound peers, all their connections are towards reachable nodes.
This extended snapshot could be used, for example, to autonomously decide which peers to connect to.

Given the above discussion, the restriction of AToM to reachable nodes should be considered a feature of the protocol, rather than a limitation.
In the rest of the paper, when talking about nodes, we will always refer to reachable ones, unless specified.

\subsection{Approach}
Similar to other topology-inferring techniques described in Section \ref{sec:relatedwork}, our solution has monitors connect to all nodes and leverage \textit{marker} messages to verify links.
To minimize the overhead, monitors verify outbound connections only.
While this might seem counter-intuitive, it is easy to see that this allows covering the whole (reachable) network.
In fact, 
all connections in the network are outbound, relatively to the node that initiated them, while inbound connections are just their symmetric view.
This approach also allows us to implicitly exclude unreachable nodes from the protocol, since no outbound connection can be established towards them.

To verify a link between two nodes, monitors have a marker message go through it, proving the two nodes are connected.
Adding an unpredictable random value to the marker, monitors ensure the only way a node can know it, is to have received it from the peer to which it was originally sent.
For instance, if a monitor wants to verify a connection between nodes $A$ and $B$, it sends to $A$ a marker containing a random value $r$; then, it probes $B$ for such a value.
If $B$ replies with the correct value, the monitor considers the connection verified.
In fact, the only way for $B$ to know $r$ is to have received from $A$, which proves they are connected.

Different from inferring techniques, which typically make use of side channels, we have nodes actively participate in the AToM protocol. 
This makes the result more reliable, but presents a downside:
if nodes misbehave, because faulty or malicious, it is hard to prove or disprove a connection without making use of trusted solutions like certificates or trusted execution environments.
We address this by assuming 
nodes have a list of "semi-trusted" monitors, which are well-known (and thus partially trusted) nodes in the network, and that the majority of them is honest.
This means the set of valid monitors should be agreed on beforehand by the Bitcoin community.
At a practical level, this can be obtained by leveraging already-existing semi-trusted entities of the Bitcoin network, such as the DNS servers used for node bootstrapping, or the list of peers hardcoded in the reference client~\cite{bitcoin-core}.

\subsection{The AToM Design}
\label{subsec:design}
Each monitor computes the network topology by executing, for each node, a \textit{Peers Verification} (PeeV) protocol.

The PeeV protocol verifies the outbound connections of a node using a \textit{marker} message.
This message contains the following information: the monitor $M$ that created the marker, the target node $N$ whose connections are being verified, and a random value $r$.
The monitor ID allows nodes to recognize which monitor is running the protocol and to verify it is a valid actor.
The target node ID and the random value are required to avoid malicious behaviors, as we will show later in this section.

Specifically, the PeeV protocol for a target node $N$ and a monitor $M$ works this way: 
\vspace{-10pt}
\begin{algorithm}[h]
	\caption{PeeV protocol for a monitor $M$ and a node $N$}
	\label{alg:peev}
	\begin{algorithmic}[1]
		\State $M$ creates a \textit{marker} message $\mu{=}[M,N,r]$, where $r$ is a random value;
		\State $M$ sends $\mu$ to $N$;
		\State $N$ forwards $\mu$ to its outbound peers $P_1,P_2,...$; 
		\State Each peer $P_i$ forwards $\mu$ to $M$;
		\State Upon receiving a marker $\mu'$ from $P_i$, $M$ checks whether $\mu'=\mu$:
		\State \hspace{10pt} If so, $M$ adds the connection $N{\to}P_i$ to its local topology snapshot.
	\end{algorithmic}
\end{algorithm}
\vspace{-10pt}
\begin{figure}[t]
	\centering
	\vspace{9pt}
	\includegraphics[width=0.3\textwidth]{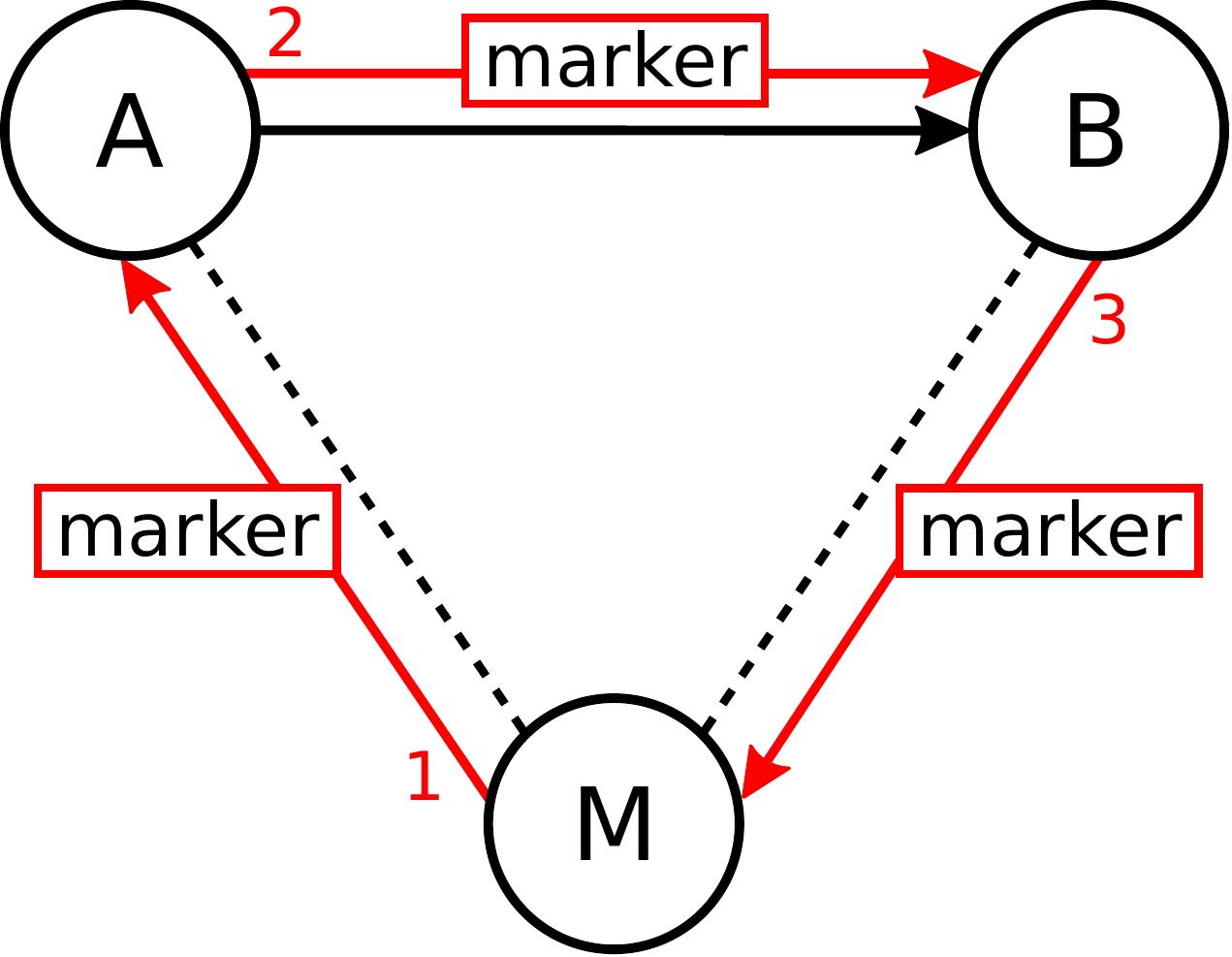}
	\vspace{5pt}
	\caption[AToM: Peers Verification overview]{PeeV overview: a monitor $M$ verifies $A{\to}B$ by sending a $marker$ message to $A$ and receiving it from $B$. Red arrows show the route of the marker.}
	\label{fig:poc}
\end{figure}

We call a single execution of this protocol a \textit{PeeV round}.
An example of a PeeV round verifying a connection $A{\to}B$ is depicted in Fig. \ref{fig:poc}.

\paragraph{\textbf{Handling network changes}}
By running a PeeV round for every node, a monitor obtains a snapshot of the full network topology.
However, changes in the network can occur at any time, producing errors in the computed snapshot.
While the relative stability of the reachable network makes the number of mistakes in a single snapshot very limited, information on the connections among nodes should be updated over time in order to monitor the network.

An easy way for monitors to do so is to simply scan the network (i.e., to run the PeeV protocol for all nodes) at a certain frequency, keeping the last computed results as the up-to-date snapshot.
However, this approach cannot adapt well to all nodes.
In fact, each node can experience changes at a different rate, making it hard to decide an appropriate scan frequency for the whole network.
In particular, if the frequency is too high, it might affect efficiency and increase network traffic.
On the other hand, setting this value too low would affect the accuracy of the snapshot.
For instance, short-lived connections established within two consecutive scans would remain undetected.

Therefore, we adopt a continuous-monitoring approach,
scanning each node (with the PeeV protocol) individually, with an independent frequency value.
In turn, this value is dynamically adjusted for each node according to the stability of its connections (i.e., the number of changes per unit of time).
This way, nodes experiencing changes at a higher rate will be scanned more frequently than stable ones.
Such an approach allows us to improve accuracy, as the scanning process adapts to the variability of single nodes, and reduce the impact on the network, as message exchanges are spread over time (while performing a one-shot snapshot requires exchanging all messages at once).

\subsubsection{Securing the protocol}
\label{subsec:malicious}
A potential issue for the AToM protocol is the presence of \textit{misbehaving nodes}.
These nodes can deviate from protocol, either accidentally (if faulty) or intentionally (if malicious), producing errors in the snapshot.
In particular, while faulty nodes do not behave consistently with each other (making it easier to spot the error), malicious ones, when controlled by a single actor, can cooperate to deceive monitors.
Therefore, in this section, we specifically address malicious behaviors, with the goal of preventing them from affecting the accuracy of the protocol.
While our countermeasures are designed to protect from malicious nodes, it should be clear that they are effective against faulty nodes as well.

\paragraph{\textbf{Adversary model}}
We consider an adversary that controls an arbitrary number of colluding nodes and aims at affecting the accuracy of the snapshot computed by the monitors.
To that purpose the adversary can try to hide or fake nodes, and to hide or fake connections.

To hide a node, the adversary needs to avoid all connections from the monitors (whose ID is known).
This can be done, for instance, by blacklisting monitor addresses.
However, monitors can easily bypass this restriction by connecting from different (unknown) addresses.
Thus, the only way a node can hide completely is by rejecting all inbound connections.
Nevertheless, this would effectively make the node unreachable, thus falling out of the AToM scope.

To fake a node, the adversary can announce a fake address, pretending the existence of a running node.
However, as previously mentioned, monitors verify the existence of a node by connecting to them and running the Bitcoin protocol.
Therefore, the only way to fake a node is to have a single instance of a client accepting connections through multiple addresses (this can be done by using different ports or different IPs redirected to the running node).
This case is analogous to controlling multiple colluding nodes (communicating with each other), which is part of our adversary model.

Colluding nodes can be used by the adversary for hiding or faking connections.
In particular, two such nodes can easily hide a connection among each other, or fake one by using external channels or a third colluding node.
There are virtually no means to detect or avoid this kind of behavior, as it is not possible to prevent malicious nodes from cooperating.
However, there is no clear advantage for the adversary in doing so, since she controls both edges of the connection.
Instead, it is realistic to assume the adversary would try to hide or fake connections with honest nodes, as this might be used for other attacks (e.g., eclipse).
Therefore, when considering a pair of nodes, we will only focus on the case where at least one node is honest.

\paragraph{\textbf{Malicious behaviors}}
Malicious nodes can deviate from the protocol by doing any of the following actions:
\begin{enumerate}
	\label{atom:malicious}
	\item forward a marker to an inbound peer; 
	\item forward a marker to a non-peer node, through a colluding node; 
	\item reuse a marker from a previous PeeV round (replay attack); 
	\item tamper with a marker; 
	\item drop a marker received from the monitor (i.e., not forward the message to one or more outbound peers); 
	\item drop a marker received from a peer (i.e. not forward the message to the monitor).
\end{enumerate}

It is easy to see that each of the above-listed behaviors potentially produce errors (false positives and false negatives) in the snapshot computed by a monitor.
In particular, cases (1) and (2) might induce the monitor to infer a non-existing outbound connection (false positive).
Similarly, case (3) might make the monitor keep in the topology snapshot a connection that does not exist anymore (false positive).
The effect of case (4) depends on which field the adversary modifies.
On the one hand, changing the monitor identity or the random value would make the marker being dropped or rejected (false negative); as such, this analogous to dropping the message (cases 5 and 6), although it generates more traffic.
On the other hand, changing the target field could be used in combination with case (2) to produce a fake connection (false positive). 
However, since the random value is bound to the target node, this modification would result in the marker being considered invalid. 
Cases (5) and (6) effectively hide one or more connections from the monitor (false negative).

Given the importance of accuracy in our topology-computation protocol, it is necessary to introduce countermeasures to avoid the effects of the above-listed behaviors.

\paragraph{\textbf{Handling malicious behaviors}}
To ensure the accuracy of our protocol in the presence of malicious nodes, we address each of the behaviors listed above.

To avoid case (1), we make (honest) nodes accept markers only from inbound peers; markers received from outbound peers are simply dropped.
Similarly, to handle case (2), we make nodes accept markers only when received from the specified target.
To avoid replay attacks (case 3), we make monitors accept only markers of the current PeeV round (for a specific target).
In this respect, the random value $r$ acts as an identifier of the round.

Case (4), (5) and (6) are hard to avoid, since we cannot prevent a malicious node from dropping or modifying a message \footnote{Note that adding a digital signature to the marker (to mitigate case 4) is not helpful, as this would be dropped or rejected either way. However, it could prevent honest nodes from forwarding invalid markers (in the combination of cases 4 and 2).}.
Therefore, we mitigate these cases by leveraging information from the monitors.

In particular, we make each node maintain a peer if confirmed by the majority of monitors.
At the end of a PeeV round, monitors send to the target node the list of its currently verified peers (both inbound and outbound).
Whenever a peer is not confirmed by the majority of monitors, its connection is closed, and the monitor is blacklisted (typically, in Bitcoin, nodes are banned for 24 hours before being readmitted as peers).

This mechanism aims at discouraging malicious nodes from misbehaving, as it might make them lose connections.
Additionally, it allows monitors to restore the correctness of the topology snapshot.
In fact, while the snapshot is initially incorrect for not including an existing connection (false negative), it becomes correct as soon as such connection is closed (true negative).
We refer to this feature as \textit{enforced consistency}.
Note that such a behavior only applies to connections where the malicious node is connected to an honest peer, since, as already discussed, it is not possible to prevent two malicious nodes from faking or hiding a connection.

\paragraph{\textbf{Handling Man-In-The-Middle Attacks}}
As connections in Bitcoin are not encrypted, a MitM adversary is able to drop, modify, and forge messages.
As we saw, the use of random values in markers prevents the adversary from forging or modifying valid markers.
In fact, these attacks are akin to the cases described above, and ultimately result in a connection not being verified.
However, differently from previous cases, this attack might aim at hiding a connection between honest nodes, eventually leading to the connection to be lost.
In other words, this is can be seen as a DoS attack.

Similarly to the drop case, there is no possible countermeasure to avoid the attack.
Nonetheless, it is worth noting that a MitM attacker is always able to drop the connection it controls, and, in the case of the Bitcoin protocol, other dangerous attacks are possible, such as deanonymization and double-spending.
Therefore, in that respect, our protocol does not create any additional attack surface.

\section{Protocol Procedures}
\label{protocol}
In this section, we formally define the AToM protocol by means of the procedures executed by monitors and nodes.
For the sake of simplicity we refer to a single monitor when describing the procedures, unless where differently specified.

\paragraph{\textbf{Protocol Messages}}
We introduce the following protocol messages:
\begin{itemize}
	\setlength\itemsep{0em}
	\item \texttt{marker = [target,monitor,value]}: sent by a monitor $M$ to a node $N$, where $\texttt{target}{=}N$, $\texttt{monitor}{=}M$ and $\texttt{value}$ is a random number;
	\item \texttt{verified = [$L_N^M$]}: sent by a monitor $M$ to a node $N$, where $L_N^M$ is the list of peers of $N$ currently verified by $M$.
\end{itemize}

\paragraph{\textbf{PeeV procedures}}
To run the PeeV protocol for a target node $N$, monitors execute the procedure shown in Algorithm \ref{alg:poc}.
This procedure generates a random number $r$, and sends to $N$ a \verb|marker| message with the following payload: $$\texttt{marker := [target:N,monitor:M,value:r]}.$$
To avoid indefinite waits, the monitor sets a timeout \verb!t! for receiving markers back from other nodes.
When a \verb!marker! message is received from a node $P$, it is checked against the one sent to $N$.
If it matches, $P$ is added to the list of verified peers $L_N^M$.
When \verb!t! expires, the PeeV round ends, and the procedure outputs the list $L_N^M$.

When a \verb|marker|${=}[M,N,r]$ message is received, nodes process it with the $HandleMarker$ procedure, shown in Algorithm \ref{alg:handle-marker}.
This procedure acts depending on the source of the message ($pfrom$):
if $pfrom$ is a legit monitor $M$, the \verb|marker| is forwarded to all outbound peers;
if $pfrom$ is an inbound peer and corresponds to the target $N$, the \verb|marker| is forwarded to the monitor $M$.

\begin{figure}[tp]
	\vspace{-20pt}
	\begin{minipage}{.48\textwidth}
		\begin{algorithm}[H]
			\caption{PeeV}
			\label{alg:poc}
			\begin{algorithmic}[1]
				\Statex $PeeV(N)$:
				\IState $r \gets rand()$ 
				\IState $\texttt{marker} {=} [N,M,r]$
				\IState $send(N,\texttt{marker})$
				\IState $t \gets getTimeout(N)$
				\IState \textbf{while} $now {<} t$ \textbf{do}:
				\IIState $receive(P,\texttt{marker})$ 
				\IIState \textbf{if} $\texttt{marker} == [N,M,r]$:
				\IIIState  $L_N^M := L_N^M \cup \{P\}$
				\IState \textbf{output} $L_N^M$
			\end{algorithmic}
		\end{algorithm}
	\end{minipage}
	\hspace{10pt}
	\begin{minipage}{.48\textwidth}
		\begin{algorithm}[H]
			\caption{HandleMarker}
			\label{alg:handle-marker}
			\begin{algorithmic}[1]
				\Statex \textbf{Env:} $O$, $\Gamma$		
				\Statex $HandleMarker(pfrom,\texttt{marker}):$
				\IState $[N,M,r] {=} \texttt{marker}$
				\IState \textbf{if} $pfrom == M$ \texttt{and} $M{\in}\Gamma$:
				\IIState \textbf{for each} $P$ in $O$ \textbf{do}:
				\IIIState $send(P,\texttt{marker})$
				\IState \textbf{if} $pfrom == N$ \texttt{and} $pfrom\text{ in }I$:
				\IIIState $send(M,\texttt{marker})$
			\end{algorithmic}
		\end{algorithm}
	\end{minipage}
\end{figure}

\paragraph{\textbf{AToM procedures}}
To build and maintain an up-to-date snapshot of the topology ($G_M$), monitors run the $AToM$ procedure, shown in Algorithm \ref{alg:atom}.
This procedure executes a loop for every node $N$:
at each iteration, the $PeeV$ procedure is run, and the output $L_N^M$ is used to update the snapshot $G_M$ ($updateTopology()$).
A \verb!verified! message is then sent to $N$ with the list of the verified peers, that is, $L_N^M$ plus all known inbound peers currently verified by other PeeV executions ($getPeers()$).
Note that peer lists do not change between PeeV rounds, which means that a connection stays in $G_M$ until a PeeV execution fails to verify it.

Before the following PeeV round for $N$, a time $f_N$ is waited ($waitNext()$), which depends on the corresponding scan frequency.
The value of $f_N$ is specific to the node $N$, and is adjusted ($adjustFrequency()$) at the end of each PeeV round according to the number of changes in the verified peer list $L_N^M$. 
In particular, when no change is detected, $f_N$ is increased (that is, the scan frequency is reduced);
instead when $c>1$ changes are detected, $f_N$ is reduced by an amount proportional to $c$.
When only 1 change is detected, the frequency is kept unchanged.
This mechanism allows AToM to better adapt to sudden and isolated pikes in the variability of peer connections.

If $N$ disconnects, it is removed from $V_M$, and all its connections are removed from $E_M$ ($removeNode()$).
Again, we assume new nodes are automatically discovered by the monitor, and added to $V_M$.
When this occurs, the corresponding PeeV loop is started.

\begin{figure}[t]
	\vspace{-20pt}
	\begin{algorithm}[H]
		\small
		\caption{AToM}
		\label{alg:atom}
		\begin{multicols}{2}
			\begin{algorithmic}[1]
				\Statex \textbf{Env}: $G_M{=}(V_M,E_M)$,  $f_{min}$, $f_{max}$
				\Statex
				
				\setcounter{ALG@line}{0}				
				\Statex $AToM()$:
				\State \textbf{for each} $N$ in $V_M$: 
				\IState \textbf{while} $isOnline(N)$ \textbf{do}:
				\IIState $L_N^M \leftarrow PeeV(N)$
				\IIState $updateTopology(E_M, L_N^M)$
				\IIState $\texttt{verified} = getPeers(N,E_M)$
				\IIState $send(N, \texttt{verified})$
				\IIState $adjustFrequency(N,L_N^M)$
				\IIState $waitNext(f_N)$
				\IState \textbf{done}
				\IState $remove(N)$		
				
				\vfill\null
				\columnbreak
				
				\setcounter{ALG@line}{0}				
				\Statex $updateTopology()$:
				\IState \textbf{for each} $P$ \textbf{in} $V_M$:
				\IIState \textbf{if} $P$ \textbf{in} $L_N^M$:
				\IIIState   $E_M = E_M \cup \{(N,P)\}$
				\IIState \textbf{else}:
				\IIIState 	 $E_M = E_M - \{(N,P)\}$		
				
				\Statex
				\setcounter{ALG@line}{0}				
				\Statex $adjustFrequency(N,L_N^M)$:
				\IState $\texttt{c} \leftarrow countChanges(G_M,L_N^M)$
				\IState \textbf{if} $\texttt{c} == 0$ \textbf{and} $f_N < f_{max}$: 
				\IIState $f_N = f_N + 1$
				\IState \textbf{else if} $\texttt{c} {>} 1$ \textbf{and} $f_N >= f_{min}$:
				\IIState $f_N = f_N - \texttt{c}$
				\Statex
				
			\end{algorithmic}
		\end{multicols}
		\vspace{-15pt}
	\end{algorithm}
	\vspace{-30pt}
\end{figure}

\begin{figure}[t]
	\begin{algorithm}[H]
		\small
		\caption{HandleVerified}
		\label{alg:handle-verified}
		\begin{multicols}{2}
			\begin{algorithmic}[1]
				\Statex \textbf{Env:} $Peers{=}O \cup I$, $\Gamma$, $\tau{=}\frac{|\Gamma|}{2}$
				\Statex 
				
				\setcounter{ALG@line}{0}	
				\Statex $HandleVerified(M, \texttt{verified}):$
				\IState $L_N^M = \texttt{verified}$
				\IState \textbf{for each} $P$ in $Peers$:
				\IIState \textbf{if} $P$ in $L_N^M$: $\upsilon_P^M = 1$
				\IIState \textbf{else}: $\upsilon_P^M = 0$	
				\IIState $checkReputation(P)$
				
				\vfill\null
				\columnbreak
				\Statex \Statex
				
				\setcounter{ALG@line}{0}	
				\Statex $checkReputation(P)$:
				\State $\phi_P = 0$
				\State \textbf{for each} $M$ in $\Gamma$: 
				\IState $\phi_P = \phi_P + \upsilon_P^M$
				\State \textbf{if} ($\phi_P \leq \tau$)
				\IState $disconnect(P)$		
			\end{algorithmic}
		\end{multicols}
		\vspace{-15pt}
	\end{algorithm}
	\vspace{-10pt}
\end{figure}

\paragraph{\textbf{The peer reputation system}}
Upon receiving a \verb!verified! message from a monitor $M$, nodes execute the $HandleVerified$ procedure shown in Algorithm~\ref{alg:handle-verified}.
This procedure uses the list of verified peers $L_N^M$ included in the message to maintain a reputation system.

In particular,
each peer has a reputation value $\phi_P$ based on the \textit{verification status} $\upsilon_P^M$ of each monitor $M$.
To set this value, the $HandleVerified$ procedure 
checks the list of current peers ($Peers$) against the list of verified peers received from $M$ ($L_N^M$).
If a peer $P$ is in $L_N^M$, $\upsilon_P^M$ is set to $1$, otherwise, it is set to $0$.
The reputation for a peer $P$ ($\phi_P$) is then obtained by summing the confirmation status of all monitor.

When connecting to a peer $P$, its verification status is set to $1$ for all monitors, so that its initial reputation value is $\phi_P{=}|\Gamma|$, corresponding to the number of monitors.
This value is then updated every time a \verb|verified| message is received.
If the reputation value falls below a threshold $\tau{=}\frac{|\Gamma|}{2}$, the peer is disconnected.
Therefore, a connection is maintained as long as it is confirmed by the majority of monitors.

This reputation system is designed to prevent the attacker from keeping a connection hidden and alive at the same time, effectively making the attack inconvenient.

\paragraph{Global Topology}
We consider as a full trusted snapshot of the topology the union of nodes and connections confirmed by the majority of the monitors.

We first define the \textit{verification set} for a connection $A{\to}B$ as the set of monitors that verified such a connection. Formally:
\begin{definition}
	\label{def:verification}
	Let $G{=}(V{,}E)$ be a Bitcoin network, $\Gamma$ be the set of monitors in $G$, and $(A{,}B)$ be a connection in $E$, 
	the \textit{verification set} $\Sigma_{AB}$ of $(A,B)$ is
	$$\Sigma_{AB}=\{M : M{\in}\Gamma \wedge (A,B){\in}E_M\}.$$
\end{definition}
Then, we define the \textit{global snapshot} $G_{AToM}$ as the union of the connections that are verified by the majority of monitors.
Formally:
\begin{definition}
	\label{def:gatom}
	Let $G{=}(V{,}E)$ be a Bitcoin network, and $\Gamma$ be the set of monitors in $G$,
	the \textit{global snapshot} $G_{AToM}$ computed over the network $G$ is
	$$G_{AToM} = (V, E_{AToM}),$$
	where
	$$E_{AToM} = \{(A,B) : |\Sigma_{AB}| > {|\Gamma|/2}\}.$$
\end{definition}
We assume monitors synchronize among each other to compute $G_{AToM}$ and always use this snapshot as the trusted one.
In particular, whenever some party requests the current state of the topology to a monitor, the global snapshot $G_{AToM}$ is returned. 

\section{Analysis}
\label{sec:analysis}

In this section, we study the correctness and accuracy of our protocol, both in an honest setting and in the presence of misbehaving nodes.
Finally, we analyze the overhead for participating nodes in terms of the number of extra messages exchanged.

For our analysis, we consider a monitor $M$ and a network $G{=}(V,E)$.
Without loss of generality, we assume $M{\in}V$ and $(M,N){\in}E$ for every $N$ in $V$, but do not show them in $G$.
As $M$ is not included in the snapshot $G_M$, this does not affect our analysis.
We analyze PeeV and AToM by means of their procedures, described in Algorithm \ref{alg:poc} and Algorithm \ref{alg:atom}, respectively.

\subsection{Correctness}
\label{sec:correctness}
For the sake of simplicity, we show the correctness of AToM in a trusted setting (i.e., when all nodes are honest).
In Section \ref{sec:security}, we will study how misbehaving nodes can affect the correctness of the result.
Additionally, we assume a connection is never dropped during the execution of the protocol (we will relax this assumption in the accuracy analysis).

Showing the correctness of our protocol in a trusted environment is relatively straightforward.
In particular, we can easily prove that a connection $N{\to}P$ is added to a topology snapshot $E_M$ if and only if it exists in the actual topology (i.e., $(N,P){\in}E$).

\paragraph{\textbf{PeeV}}
We prove that, for a target node $N$ and a peer $P$, the $PeeV$ procedure adds $P$ to $L_N^M$ if and only if $N{\to}P$ exists in $G$.
We give a formal proof for a network $G{=}(V,E)$, and nodes $N,P{\in}V$.

\begin{theorem}
	\label{theorem:peev}
	Let $G{=}(V,E)$ be a Bitcoin network, $N,P$ be nodes of $V$, and $M$ a legit monitor.
	Then, if $M$ executes the procedure $PeeV(N)$ defined in Algorithm \ref{alg:poc}, and $N,P$ execute the $HandleMarker()$ procedure defined in Algorithm \ref{alg:handle-marker}, the following condition holds on the output $L_N^M$ of $PeeV(N)$:
	$$P{\in}L_N^M \iff (N,P){\in}E.$$
\end{theorem}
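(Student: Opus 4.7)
I would prove the biconditional by establishing each direction separately, in both cases just tracing the marker $\mu = [N,M,r]$ through the pseudocode of $PeeV$ and $HandleMarker$.

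For the forward direction $(N,P)\in E \Rightarrow P \in L_N^M$, the plan is to follow $\mu$ along the legitimate path. The procedure $PeeV(N)$ generates $r$, assembles $\mu = [N,M,r]$, and sends $\mu$ to $N$. When $N$ runs $HandleMarker$ on this incoming message, $pfrom = M$ and $M\in\Gamma$, so the first branch fires and $N$ forwards $\mu$ to every node in its outbound set $O_N$. Since $(N,P)\in E$ means $P\in O_N$, the node $P$ receives $\mu$. Now $P$ runs $HandleMarker$ with $pfrom = N$, and the target field of $\mu$ is also $N$; moreover, because $N \to P$, node $N$ is an inbound peer of $P$, i.e.\ $N\in I_P$. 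The guard of the second branch is therefore satisfied and $P$ forwards $\mu$ to $M$. Back inside $PeeV(N)$, the received marker matches $[N,M,r]$ verbatim, so $P$ is added to $L_N^M$ before the timeout expires (recall we assumed no connection drops during the round).

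For the backward direction $P\in L_N^M \Rightarrow (N,P)\in E$, I would reason in reverse. Membership in $L_N^M$ means $M$ received from $P$ a message that it recognised as $\mu = [N,M,r]$. Since every honest node emits markers only through $HandleMarker$, $P$ must have sent $\mu$ via one of its two branches; only the second branch sends to $M$, and its guard enforces $pfrom = \text{target}(\mu) = N$ and $pfrom\in I_P$. The latter condition is precisely $N\in I_P$, which by the definitions of inbound/outbound unfolds to $(N,P)\in E$, as required.

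The one subtle point — essentially trivial in the trusted setting but worth spelling out because it anticipates the security analysis — is arguing that $P$ could not have obtained $\mu$ by any route other than the intended one. In this setting $r$ is sampled freshly by $M$ inside this invocation of $PeeV(N)$ and is handed to no one but $N$; the branches of $HandleMarker$ only ever propagate $\mu$ onward from $N$ to $N$'s outbound peers. Hence the sole node that could have delivered $\mu$ to $P$ is $N$, which is exactly what the backward direction needed. Because both directions amount to reading off the two branches of $HandleMarker$, I do not expect any real obstacle; the only risk is being sloppy about the direction conventions for inbound versus outbound peers when identifying $N\in I_P$ with $(N,P)\in E$.
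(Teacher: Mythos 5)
Your proposal is correct and follows essentially the same argument as the paper: both directions are established by tracing the marker $\mu=[N,M,r]$ through the two branches of $HandleMarker$ under the stated assumptions (connection stability and a sufficiently long timeout). The only cosmetic difference is that the paper proves the converse via its contrapositive ($(N,P){\notin}E \implies P{\notin}L_N^M$, since $P$ then never receives the marker), whereas you reason directly backward from $M$'s receipt of $\mu$ and additionally make explicit the freshness-of-$r$ observation that the paper leaves implicit; both are equally valid in the trusted setting.
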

\begin{proof}
	We prove the two sides of the implication separately.	
	We assume the connection $N{\to}P$ is maintained during the whole execution of the protocol, and that the timeout $t$ in $PeeV(N)$ is longer than the time needed by a \texttt{marker} message to be transmitted through the sequence of nodes $M{-}N{-}P{-}M$. 
	
	We start proving that $(N,P){\in}E {\implies} P{\in}L_N^M.$
	
	Let $L$ be the set of outbound peers of $N$, and 
	let us assume $(N,P){\in}E$.
	Then, if $M$ executes $PeeV(N)$, and $N$,$P$ execute $HandleMarker()$, the following occurs:
	$M$ creates \verb|marker|${=}[N,M,r]$ and sends it to $N$;
	$N$ receives \verb|marker| from $M$, and since $\texttt{pfrom}{=}M$, it sends \verb|marker| to every peer in $L$;
	since $(N,P){\in}E {\implies} P{\in}L$, $P$ receives \verb|marker| from $N$;
	as $\texttt{pfrom}{=}N$, $P$ sends \verb|marker| to $M$;
	$M$ receives \verb|marker| from $P$;
	as $\texttt{marker}{=}[N,M,r]$, $M$ executes $L_N^M{:=}L_N^M{\cup}{P}$;
	finally, the $PeeV(N)$ procedure outputs $L_N^M$, such that $P{\in}L_N^M$.
	
	We prove the converse side of the implication by proving its contrapositive: $(N,P){\notin}E {\implies} P{\notin}L_N^M.$
	Let $L$ be the set of outbound peers of $N$, and let us assume $(N,P){\notin}E$.
	Then, if $M$ executes $PeeV(N)$, and $N$ executes $HandleMarker()$, the following occurs:
	$M$ creates \verb|marker|${=}[N,M,r]$ and sends it to $N$;
	$N$ receives \verb|marker| from $M$, and since $\texttt{pfrom}{=}M$, it sends \verb|marker| to every peer in $L$;
	since $(N,P){\notin}E {\implies} P{\notin}L$, $P$ does not receive \verb|marker| from $N$;
	as $M$ does not receive \verb|marker| from $P$, it does not execute $L_N^M{:=}L_N^M{\cup}{P}$;
	finally, the $PeeV(N)$ procedure outputs $L_N^M$, such that $P{\notin}L_N^M$.

\end{proof}

\paragraph{\textbf{AToM}}
The correctness of AToM directly derives from that of PeeV.
In particular, we can prove that the $AToM$ procedure adds a connection $N{\to}P$ to the snapshot $G_M$ if and only if $(N,P){\in}E$.
Similar to the previous case, we assume, for the sake of simplicity, that connections are not dropped.

\begin{theorem}
	\label{thr:atom-correctness}
	Let $G{=}(V,E)$ be a Bitcoin network, and $M$ a legit monitor, and $G_M{=}(V,E_M)$ be the local snapshot of $M$.
	Then, if $M$ executes the procedure $AToM(V)$ defined in Algorithm \ref{alg:atom}, the following condition holds:
	
	$$(N,P){\in}E \iff (N,P){\in}E_M.$$
\end{theorem}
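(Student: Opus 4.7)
The plan is to reduce this statement directly to Theorem~\ref{theorem:peev}, using the fact that the $AToM$ procedure is essentially a scheduler that repeatedly invokes $PeeV$ on every node and then faithfully transcribes its output into $E_M$ via $updateTopology$. Under the standing assumption that no connection is dropped during execution and that every $PeeV$ timeout is long enough for markers to complete their round-trip, we only need to track how $E_M$ is mutated.

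First, I would prove both directions of the equivalence for a fixed pair $(N,P)$ by considering the state of $E_M$ immediately after the iteration in which $M$ invokes $PeeV(N)$ inside the outer loop of Algorithm~\ref{alg:atom}. For the forward direction, assume $(N,P){\in}E$. By Theorem~\ref{theorem:peev}, the call $PeeV(N)$ returns some $L_N^M$ with $P{\in}L_N^M$. The subsequent call to $updateTopology$ iterates over every $P'{\in}V_M$ and executes the branch $E_M{=}E_M{\cup}\{(N,P')\}$ precisely when $P'{\in}L_N^M$; in particular it adds $(N,P)$ to $E_M$. Since no connection is dropped and no later iteration for $N$ can remove an edge that $PeeV$ continues to confirm (again by Theorem~\ref{theorem:peev}), the edge persists in $E_M$.

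For the reverse direction, I would prove the contrapositive: if $(N,P){\notin}E$, then $P{\notin}L_N^M$ by Theorem~\ref{theorem:peev}, so the $else$ branch of $updateTopology$ executes $E_M{=}E_M{-}\{(N,P)\}$, ensuring $(N,P){\notin}E_M$ after this iteration. Together with the forward direction, the equivalence holds after $M$ has completed at least one $PeeV$ round for every node $N{\in}V_M$, which is exactly what the outer $AToM$ loop guarantees under the stated assumptions.

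The main subtlety, and the step I would be most careful with, is the \emph{temporal} aspect: $AToM$ is a continuous loop rather than a one-shot procedure, and $PeeV$ rounds for different target nodes execute asynchronously. I would therefore state the invariant precisely as ``once the first $PeeV(N)$ round completes, the equivalence holds for every edge with source $N$ and is preserved by all subsequent rounds (given the no-disconnection assumption)''. This avoids any ambiguity about at what point in time $E_M$ is evaluated and makes the reduction to Theorem~\ref{theorem:peev} clean. Effects such as dropped connections and adversarial behavior are explicitly deferred to the accuracy and security analyses, so no further machinery is needed here.
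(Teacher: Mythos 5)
Your proposal is correct and follows essentially the same route as the paper: reduce to Theorem~\ref{theorem:peev} and observe that $updateTopology$ faithfully transcribes membership in $L_N^M$ into membership in $E_M$ (adding on the \texttt{if} branch, removing on the \texttt{else} branch). Your extra care about the temporal invariant across repeated rounds is a welcome refinement of the paper's terser argument, but it is not a different approach.
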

\begin{proof}
	Let us assume $(N,P){\in}E$.
	Then, when $M$ executes $AToM(V)$, the following occurs:
	since $N{\in}V$, $M$ executes a \texttt{while} loop for $N$;
	the $AToM$ procedure loop executes $L_N^M \leftarrow PeeV(N)$;
	the $AToM$ procedure loop executes $updateTopology(E_M,L_N^M)$;
	since, by Theorem \ref{theorem:peev}, $(N,P){\in}E \iff P{\in}L_N^M$, $updateTopology(E_M, L_N^M)$ adds $(N,P)$ to $E_M$ if and only if $(N,P){\in}E$.
	
\end{proof}

\subsection{Security Analysis}
\label{sec:security}
In this section, we analyze the correctness of AToM in the presence of misbehaving nodes. 
In particular, we study how malicious nodes can affect the global snapshot $G_{AToM}$.
We refer to the possible misbehaviors for a malicious node, listed in Section \ref{atom:malicious}: 
(1) forward a marker to an inbound peer; 
(2) forward a marker to a non-peer node, through a colluding node;
(3) resend a valid marker used in a previous PeeV round (replay attack); 
(4) tamper with a marker; 
(5) drop a marker received from the monitor (i.e., not forward the message to one or more outbound peers);
(6) drop a marker received from a peer (i.e. not forward the message to the monitor).

Given the fact that two colluding nodes cannot be prevented from faking or hiding connections, we only study the cases where at least one node is honest.
In Section \ref{sec:experiments}, we experimentally evaluate how accuracy is affected by how multiple colluding nodes can affect the accuracy of the protocol.

We consider a malicious node $N$ and an honest peer $P$.
For cases (1) to (3), we show that, for any monitor $M$ computing a snapshot $G_M$, $(N,P){\in}E_M$ only if $(N,P){\in}E$. We do so by proving that none of these behaviors make the PeeV protocol verify an incorrect peer.

\begin{lemm}
	Let $M$ be a monitor, $N$ be a malicious node, and $P$ be an honest inbound peer of $N$.
	If $M$ executes $PeeV(N)$ and $N$ forwards \texttt{marker} to $P$, then the following condition holds when $PeeV(N)$ ends: $(P){\notin}L_N^M$.
\end{lemm}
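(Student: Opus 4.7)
The plan is to trace the marker through the honest peer $P$ and show that, because $P$ is inbound with respect to $N$ (and therefore $N$ is outbound with respect to $P$), the $HandleMarker$ procedure executed by $P$ will never forward the marker back to $M$. Since the only way for $M$ to add $P$ to $L_N^M$ in the $PeeV(N)$ procedure (Algorithm~\ref{alg:poc}) is to receive a valid matching \texttt{marker} message from $P$, this suffices.

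First I would unpack the definitions. By assumption $P \in I_N$, i.e., $P$ initiated the connection $P \to N$. By the definition of outbound/inbound connections given in Section~\ref{sec:background}, this means $N \in O_P$ and in particular $N \notin I_P$. Next, I would describe what happens when $N$ deviates from the protocol and sends $\texttt{marker}=[N,M,r]$ to $P$ over the existing link: $P$ executes $HandleMarker(pfrom,\texttt{marker})$ with $pfrom = N$. In this procedure (Algorithm~\ref{alg:handle-marker}), the first branch (forward to outbound peers) requires $pfrom = M$, which fails since $pfrom = N \neq M$. The second branch (forward to $M$) requires both $pfrom = N$ (true) \emph{and} $pfrom \in I_P$, which is false by the observation above. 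Hence $P$ performs no action and in particular does not send \texttt{marker} to $M$.

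I would then close the loop: throughout the $PeeV(N)$ loop executed by $M$, the condition $receive(P,\texttt{marker})$ with the correct payload $[N,M,r]$ is never triggered via this path. Of course, I should also rule out the possibility that $M$ receives the correct \texttt{marker} from $P$ through some other route originating in this same round. Since $P$ is honest and $P$ never received \texttt{marker} from $M$ directly (that message was sent only to $N$), $P$'s only opportunity to forward \texttt{marker} to $M$ arises through the $HandleMarker$ branch just analyzed, which we have shown does not fire. Consequently, the condition $\texttt{marker} == [N,M,r]$ with source $P$ is never satisfied within the round, so the assignment $L_N^M := L_N^M \cup \{P\}$ is never performed and the procedure outputs $L_N^M$ with $P \notin L_N^M$.

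The only delicate step is the ``no other route'' argument: one must be careful that $P$ cannot learn $[N,M,r]$ indirectly and send it to $M$ in a way that would be accepted. This is not truly an obstacle, however, since $P$ is honest and the only incoming copy of \texttt{marker} at $P$ in this $PeeV$ round is the one coming from $N$ (the marker was sent by $M$ solely to $N$, and other honest nodes who forward it do so only to $M$, not to $P$). Thus the case analysis of $HandleMarker$ at $P$ is exhaustive, and the lemma follows.
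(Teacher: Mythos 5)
Your proof is correct and follows essentially the same route as the paper's: both trace the marker to $P$, observe that $N$ is an outbound (not inbound) peer of $P$ so the forwarding condition in $HandleMarker$ fails, and conclude that $M$ never receives the marker from $P$, hence $P \notin L_N^M$. Your additional remark ruling out any other route by which $P$ could obtain and forward the marker is a small strengthening the paper leaves implicit, but the core argument is the same.
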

\begin{proof}
	Let us assume $M$ executes $PeeV(N)$ (Algorithm \ref{alg:peev}), and $N$ forwards to $P$ the \verb!marker! received from $M$.
	Then, the following occurs: $P$ receives \texttt{marker} and runs the $HandleMarker$ procedure (Algorithm \ref{alg:handle-marker}); 
	in this procedure, since $pfrom = N$, the \texttt{if} statement at line 5 yields $true$;
	since $pfrom.out = true$, the \texttt{if} statement at line 6 yields $false$, so line 7 ($send(M,\texttt{marker})$) is not executed;
	during the execution of $PeeV(N)$, since \texttt{marker} is not received from $P$, line 8 ($L_N^M := L_N^M \cup \{P\}$) is not executed;
	hence, at the end of $PeeV(N)$, $P{\notin}L_N^M$.
	
\end{proof}

\begin{lemm}
	Let $M$ be a monitor, $N$ be a malicious node, $C$ be a colluding node, and $P$ be an honest outbound peer of $C$ that is not connected to $N$.
	If $M$ executes $PeeV(N)$, $N$ forwards \texttt{marker} to $C$, and $C$ forwards it to $P$, then the following condition holds when $PeeV(N)$ ends: $P{\notin}L_N^M$.
\end{lemm}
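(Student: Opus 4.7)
The plan is to mimic the proof of the previous lemma by tracing the execution of $HandleMarker$ at the honest peer $P$ and showing that $P$ never relays the \texttt{marker} to $M$; this suffices, since $PeeV(N)$ only adds $P$ to $L_N^M$ if a matching \texttt{marker} arrives from $P$ before the timeout.

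First I would unroll the hypothesis: $M$ constructs $\texttt{marker}=[N,M,r]$ and sends it to $N$; by assumption the malicious $N$ forwards it to the colluding node $C$, and $C$ in turn forwards it to its honest outbound peer $P$. Upon receipt, $P$ runs $HandleMarker(pfrom,\texttt{marker})$ from Algorithm \ref{alg:handle-marker} with $pfrom=C$. The first branch ($pfrom == M \wedge M \in \Gamma$) fails because $C$ is neither the monitor $M$ nor, in general, a member of $\Gamma$. The second branch is the decisive one: it requires $pfrom == N$ and $pfrom \in I$. Here the target field of the marker is $N$, whereas the sender is $C\neq N$, so the guard fails. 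Consequently $P$ does not execute $send(M,\texttt{marker})$, and $M$ never receives a \texttt{marker} message from $P$ during this $PeeV(N)$ round. The conditional on line 7 of Algorithm \ref{alg:poc} is therefore never triggered with source $P$, so $L_N^M := L_N^M \cup \{P\}$ is never performed and $P \notin L_N^M$ when $PeeV(N)$ terminates.

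The conceptual point — and the whole reason the target identity is embedded in the marker — is exactly this guard: it binds the message to the node that should have originated it, so no honest peer will complete the relay chain on behalf of a different source. The one subtlety worth dispatching is the possibility that $P$ could receive the same \texttt{marker} via an alternative route and legitimately forward it to $M$. The only such route would require $P$ to have $N$ itself as an inbound peer, but the hypothesis $(N,P)\notin E$ (i.e., $P$ is not connected to $N$) rules this out. Thus no execution path leads to $P \in L_N^M$.

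The main obstacle I anticipate is essentially bookkeeping rather than ideas: one must be careful that $PeeV$ identifies a candidate peer by the sender of the inbound \texttt{marker} (captured as $P$ at line 6 of Algorithm \ref{alg:poc}), not by the target field of the message payload. Once that is made explicit, the argument reduces to a direct case analysis on the guards of $HandleMarker$, exactly parallel to the proof of the previous lemma.
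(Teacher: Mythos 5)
Your proof is correct and follows essentially the same route as the paper's: trace $HandleMarker$ at the honest peer $P$ with $pfrom=C$, observe that the guard requiring the sender to equal the target $N$ fails, and conclude that $P$ never forwards the marker to $M$, so $P\notin L_N^M$. Your extra remark ruling out an alternative delivery route via $N$ is a harmless (and slightly more careful) addition that the paper leaves implicit.
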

\begin{proof}
	Let us assume $M$ executes $PeeV(N)$ (Algorithm \ref{alg:peev}), $N$ forwards \verb!marker! to $C$, and $C$ forwards it to $P$.
	Then, the following occurs: $P$ receives \texttt{marker} and runs the $HandleMarker$ procedure (Algorithm \ref{alg:handle-marker}); 
	in this procedure, since $pfrom = C$, the \texttt{if} statement at line 5 yields $false$, so line 7 ($send(M,\texttt{marker})$) is not executed;
	during the execution of $PeeV(N)$, since \texttt{marker} is not received from $P$, line 8 ($L_N^M := L_N^M \cup \{P\}$) is not executed;
	hence, at the end of $PeeV(N)$, $P{\notin}L_N^M$.
	
\end{proof}
Note that, as previously mentioned, the adversary could modify the \texttt{marker} message setting the colluding node $C$ as the $target$. In this case, $P$ would consider the message valid and forward it to $M$.
However, the $value$ field would not match the target, making the monitor discard the marker. 

\begin{lemm}
	Let $M$ be a monitor, $N$ be a malicious node, and $P$ be a previously-connected honest inbound peer of $N$.
	Let $PeeV(P)_i$ be a past PeeV round for node $P$, and let $\mu=[P,M,r_i]$ be the $\texttt{marker}$ received by $N$ in that round.
	Then, if during a subsequent round $PeeV(P)_j$, $N$ is not connected to $P$ (i.e., $(P,N){\notin}E$) and sends $\mu$ to $M$, the following condition holds when $PeeV(P)_j$ ends: $N{\notin}L_P$.
\end{lemm}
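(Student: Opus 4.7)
The plan is to exploit the fact that each PeeV round uses a freshly sampled random value, so the monitor's acceptance check at line 7 of Algorithm \ref{alg:poc} is round-specific: a marker from round $i$ cannot pass the check in round $j$.

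First I would examine the round $PeeV(P)_j$ on the monitor side. By line 1 of Algorithm \ref{alg:poc}, $M$ draws a fresh $r_j \leftarrow rand()$ and builds $\texttt{marker}' = [P, M, r_j]$. Under the standing assumption that $rand()$ samples from a sufficiently large space, $r_j \neq r_i$ with overwhelming probability (this can be stated explicitly as $\Pr[r_j = r_i] \leq 2^{-\lambda}$ for an appropriate security parameter $\lambda$, or handled as an assumption as in Theorem \ref{theorem:peev}).

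Next I would trace the adversary's action: when $N$ replays $\mu = [P, M, r_i]$ to $M$ during $PeeV(P)_j$, the monitor executes the \texttt{if} guard at line 7 comparing $\mu$ against $[P, M, r_j]$. Since $r_i \neq r_j$, the comparison yields $false$, so line 8 ($L_P := L_P \cup \{N\}$) is not triggered by this receipt. Hence $N$ is not added to $L_P$ via the replayed message.

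Finally I would close off the only remaining path: could $N$ obtain the current-round value $r_j$ and forge a valid marker? Since $(P, N) \notin E$, $N$ is not an outbound peer of $P$, and by Algorithm \ref{alg:handle-marker} (line 2--4), $P$ forwards the current-round marker only to its outbound peers, so $N$ never legitimately learns $r_j$. No other path in Algorithms \ref{alg:poc}--\ref{alg:handle-marker} adds $N$ to $L_P$, so $N \notin L_P$ when the round ends. The only subtle step is the collision bound on $r_j$ versus $r_i$; aside from this negligible-probability event, the argument is a direct case analysis on the comparison at line 7.
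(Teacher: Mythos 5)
Your proposal is correct and follows essentially the same route as the paper: the replayed marker $\mu=[P,M,r_i]$ fails the equality check against the fresh $\mu'=[P,M,r_j]$ in $PeeV(P)_j$, so the line adding $N$ to the verified list is never executed. The paper leaves the freshness of $r_j$ (i.e., $r_i\neq r_j$) implicit, whereas you make the collision bound and the impossibility of $N$ learning $r_j$ explicit; this is a welcome tightening but not a different argument.
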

\begin{proof}
	Let us suppose $M$ is running $PeeV(P)_j$ (Algorithm \ref{alg:peev}) with a \texttt{marker} $\mu'=[P,M,r_j]$, and $N$ sends $\mu$ to $M$.
	Then, the following occurs: $M$ receives $\mu$ from $N$, a compare it with $\mu'$;
	since $\mu{\neq}\mu'$, line 8 ($L_N^M := L_N^M \cup \{P\}$) is not executed;
	hence, at the end of $PeeV(P)_j$, $N{\notin}L_P$.
	
\end{proof}

For cases (4) to (6), we show that a mismatch between $E$ and $E_{AToM}$ can only occur during a limited time frame, thanks to the enforced consistency feature.
For the sake of simplicity, we only prove case (5), as the other two cases are equivalent (modified markers are dropped by the monitor) and thus follow the same reasoning.
We assume the connection $N{\to}P$ can only be closed by the $HandleVerifier$ procedure.
\begin{lemm}
	Let $\Gamma$ be the set of monitors, $N$ be an honest node, and $P$ be a malicious outbound peer of $N$ ($(N,P){\in}E$).
	If each monitor $M$ runs $PeeV(N)$ (Algorithm \ref{alg:peev}) 
	then when all PeeV rounds have ended, the following condition holds:
	
	$$(N,P){\in}E_{AToM} \iff (N,P){\in}E$$
\end{lemm}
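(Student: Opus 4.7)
My plan is to reduce this biconditional to the interplay between the PeeV correctness on the honest $N$ side and the reputation mechanism that $N$ maintains via $HandleVerified$. Because $N$ is honest, line 4 of Algorithm \ref{alg:handle-marker} guarantees that $N$ forwards markers only to its current outbound peers, so by essentially the argument of Theorem \ref{theorem:peev} combined with the replay/forgery arguments of the preceding lemmas, $P \in L_N^M$ for a monitor $M$ implies $(N,P) \in E$ at the time of that PeeV round. The reverse implication (which held in the honest setting) can fail here, because the malicious $P$ may drop or tamper with the marker (cases 4--6); this asymmetry is precisely what makes the lemma nontrivial and what the reputation system must absorb.

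Next I would track what happens after each monitor $M$ sends its \texttt{verified} message to $N$ (line 6 of Algorithm \ref{alg:atom}). Running $HandleVerified$, $N$ sets $\upsilon_P^M = 1$ exactly for $M \in \Sigma_{NP}$, so $\phi_P = |\Sigma_{NP}|$. The call to $checkReputation$ then disconnects $P$ precisely when $\phi_P \leq \tau = |\Gamma|/2$. Combining the two observations yields the claim: when all PeeV rounds have ended and the resulting \texttt{verified} messages have been processed, the connection $(N,P)$ remains in $E$ iff $|\Sigma_{NP}| > |\Gamma|/2$, which by Definition \ref{def:gatom} is exactly the condition $(N,P) \in E_{AToM}$. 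The forward direction then reads: if $(N,P) \in E_{AToM}$, a majority of monitors verified the connection, so $\phi_P > \tau$ and $N$ does not disconnect $P$; the converse reads: if $(N,P) \in E$ at the end, the reputation check must have passed, so $|\Sigma_{NP}| > |\Gamma|/2$ and hence $(N,P) \in E_{AToM}$.

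The step I expect to be the main obstacle is pinning down the timing behind the phrase ``when all PeeV rounds have ended''. The biconditional only holds once the \texttt{verified} messages triggered by those rounds have actually been delivered to $N$ and acted upon by $checkReputation$; otherwise one could momentarily observe a mismatch because $N$ has not yet enforced the reputation rule, or because the monitors have not yet refreshed their local snapshots to reflect $N$'s disconnect. I would therefore carry out the proof under the natural interpretation that a PeeV round is regarded as complete only after its \texttt{verified} message has been processed by the target, so that the enforced-consistency feature is in effect at the moment the two sides of the biconditional are compared; under that reading, the case analysis over behaviors (4)--(6) collapses, since all three can only remove $P$ from some $L_N^M$, never insert a spurious entry, and the reputation threshold then decides both $E$ and $E_{AToM}$ by the same inequality.
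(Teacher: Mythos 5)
Your proposal is correct and follows essentially the same route as the paper: the paper also reduces the biconditional to the single threshold inequality, letting $\psi$ (your $|\Sigma_{NP}|$) be the number of monitors to which $P$ correctly forwards the marker, and then doing the two-case analysis $\psi>\tau$ versus $\psi\leq\tau$ to show that membership in $E_{AToM}$ and survival of the reputation check in $HandleVerified$ are decided by the same condition. Your added remark on the timing of the \texttt{verified} messages makes explicit an assumption the paper leaves implicit, but does not change the argument.
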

\begin{proof}
	Let us suppose $(N,P){\in}E$, and all monitors in $\Gamma$ execute $PeeV(N)$.
	For each monitor $M$, $N$ receives a \verb|marker| message during the PeeV round, which is forwarded to $P$ ($|\Gamma|$ messages in total).
	At the end of each PeeV round, $N$ receives a \verb|verified| message from each monitor $M$, containing the list $L_N^M$ of verified peers.
	
	Now, let $\psi$ be the number of markers that $P$ correctly forwards to the monitor.
	Given the threshold value $\tau{=}\frac{|\Gamma|}{2}$, 
	two cases are possible: (a) $\psi {>} \tau$ or (b) $\psi {\leq} \tau$.
	
	In case (a), 
	given Definition \ref{def:gatom}, we have $(N,P){\in}E_{AToM}$.
	At the same time, since $P{\in}L_N^M$ for $\psi{>}\tau$ monitors, we have $\phi_P{>}\tau$; 
	hence, $P$ is not disconnected ($(N,P){\in}E$).
	
	In case (b), 
	given Definition \ref{def:gatom}, we have $(N,P){\notin}E_{AToM}$.
	At the same time, since $P{\in}L_N^M$ for $\psi{\leq}\tau$ monitors, we have $\phi_P{\leq}\tau$; 
	hence, $P$ is disconnected ($(N,P){\notin}E$).

\end{proof}

\subsection{Accuracy}
As previously discussed, the accuracy of AToM depends on how the network topology changes over time.
In this section, we study all the events that can produce a mismatch between the global snapshot $G_{AToM}$ and the actual topology $G$.
In particular, we consider the following events: 
\begin{enumerate}
	\item a new node $N$ joins the network: $V := V \cup \{N\}$;
	\item a node $N$ leaves the network: $V := V - \{N\}$;
	\item a new connection $N{\to}P$ is established: $E := E \cup \{(N,P)\}$;
	\item a connection $N{\to}P$ is closed: $E := E - \{(N,P)\}$.
\end{enumerate}
Note that case (1) implies case (3), since, by definition, a node with no connections is not part of the network. 
Similarly, case (2) implies case (4) for all the connections held by the node leaving the network.
Thus, without loss of generality, we focus on cases (3) and (4).

Let us consider a monitor $M$ and a node $N$.
In both cases (3) and (4), if the event occurs before executing $PeeV(N)$, no mismatch is produced in the local snapshot $G_M$.
Similarly, if a connection $N{\to}P$ is closed during a PeeV round, $M$ will not receive the \verb|marker| message and correctly remove $(N,P)$ from $E_M$.

Hence, only two cases can produce an error in the snapshot $G_M$:
when a connection $N{\to}P$ is dropped after $PeeV(N)$ completes, and 
when a connection $N{\to}P$ is established after $PeeV(N)$ has started.
The first case generates a false positive: $(N,P){\in}E_G$ but $(N,P){\notin}E$.
The second case generates a false negative: $(N,P){\in}E_G$ but $(N,P){\notin}E_M$.
Relatively to the local snapshot $G_M$, both mismatches will be fixed at the following PeeV round for $N$.
In particular, for any node $N$ and monitor $M$, errors can only exist during the time frame between two consecutive executions of $PeeV(N)$, which depends on the scan frequency $f_N^M$ set by $M$ for the target $N$.

Since the global snapshot is the majoritarian union of the local snapshots (a connection is included only if confirmed by the majority of monitors), 
a mismatch in $G_{AToM}$ can only last while half of the monitors contemporary have the error in their local snapshot.
Specifically, the time frame during which an error can stay in the global snapshot is, in the worst case, the smallest frequency for $N$ that is greater than the smallest $\frac{|\Gamma|}{2}$ frequencies.

Formally, the following theorem holds:
\begin{theorem}
	Let $[f_N^{1},f_N^{2},...,f_N^{|\Gamma|}]$ be the ordered list of all frequencies set by monitors $M_1,M_2,...,M_{|\Gamma|}$ for node $N$: .
	The longest time frame during which an error can exist in $G_{AToM}$ is $f_N^{\frac{|\Gamma|}{2}+1}$.
\end{theorem}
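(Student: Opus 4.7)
The plan is to split the analysis into the two possible error types (false positive and false negative in $G_{AToM}$), bound how long each type of error can live in the local snapshot of a single monitor, and then lift this per-monitor bound to the global snapshot using the majority rule from Definition~\ref{def:gatom}. The preceding text already essentially proves the single-monitor bound, so the theorem reduces to counting how many monitors need to have completed a fresh $PeeV(N)$ round before a majority-based error must disappear.

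First I would recall that by the discussion immediately preceding the theorem, the only events that can introduce a mismatch between $E$ and $E_M$ involving node $N$ are a connection $N{\to}P$ being established after $PeeV(N)$ has started, or being dropped after $PeeV(N)$ has completed; in both cases the discrepancy in $E_M$ is repaired at the next execution of $PeeV(N)$ by $M$. Hence, if the network change occurs at time $t_0$, the local snapshot $G_M$ of monitor $M$ becomes consistent with the new state no later than the next $PeeV(N)$ instant, and in the worst case this happens after a full period $f_N^M$ (the change arrives just after a completed round for $M$). So for each $M$ the local error lifetime is at most $f_N^M$.

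Next I would lift this to the global snapshot. By Definition~\ref{def:gatom}, the edge $(N,P)$ is in $E_{AToM}$ iff strictly more than $|\Gamma|/2$ monitors report it in their local snapshots. For a false negative (newly established connection) the error persists as long as at most $|\Gamma|/2$ monitors have refreshed their view of $N$; the error is removed once the $(\lfloor|\Gamma|/2\rfloor+1)$-th monitor completes its $PeeV(N)$ round after $t_0$. Symmetrically, for a false positive (recently dropped connection) the error persists as long as a strict majority still reports the phantom edge, which again fails to hold once $\lfloor|\Gamma|/2\rfloor+1$ monitors have refreshed. Sorting the worst-case refresh times of the monitors gives the increasing list $f_N^{1}\le f_N^{2}\le\cdots\le f_N^{|\Gamma|}$; by time $f_N^{k}$ at least $k$ monitors have completed a post-$t_0$ round in the worst case, so the majority threshold is first crossed at $f_N^{|\Gamma|/2+1}$, yielding the claimed bound.

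The main obstacle is making the worst-case alignment argument fully rigorous. Concretely, one has to show that the adversarial / pessimistic choice of $t_0$ can simultaneously push every monitor's next $PeeV(N)$ exactly one period away (so that the ordered refresh times really are $f_N^{1}\le\cdots\le f_N^{|\Gamma|}$), and that no extra waiting arises from the $adjustFrequency$ update or from overlapping rounds. A careful statement would fix a reference instant, use the fact that the $f_N^M$ are independent across monitors, and take the supremum over phase offsets to justify treating the ordered frequencies as the ordered residual waits; after that the counting argument of the previous paragraph gives the bound immediately.
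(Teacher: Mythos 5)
Your proposal is correct and follows essentially the same argument as the paper: assume the worst-case phase alignment in which every monitor has just completed $PeeV(N)$ when the change occurs at $t_0$, note that the $k$-th monitor (in order of refresh period) repairs its local snapshot by $t_0+f_N^{k}$, and conclude that the majority threshold of Definition~\ref{def:gatom} is crossed at $t_0+f_N^{\frac{|\Gamma|}{2}+1}$. The extra care you take over the false-positive versus false-negative cases and over justifying the worst-case alignment only elaborates on steps the paper asserts more tersely.
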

\begin{proof}
	Let $t_0$ be the time in which a change occurs.
	We consider the worst-case scenario where all monitors execute $PeeV(N)$ at time $t_0-1$.
	For the sake of simplicity, we assume the execution of $PeeV(N)$ as an atomic event.
	
	At time $t_0+f_N^1$ the first monitor detects the change. 
	At time $t_0+f_N^{\frac{|\Gamma|}{2}}$, half of the monitors will have detected the change.
	At time $t_0+f_N^{\frac{|\Gamma|}{2}+1}$, the majority of monitors will have detected the change, which is then reflected into $G_{AToM}$.
	
\end{proof}

Given the above, the accuracy of the global snapshot can be adjusted by setting a maximum frequency value for all monitors.

\subsection{Overhead}
In this section, we analyze the impact of AToM over network nodes, in terms of the number of extra messages introduced by our protocol and comparing it to the average number of messages currently exchanged by a node of the Bitcoin network.

We calculate the average number of messages exchanged by a node $N$ in a \textit{complete PeeV round}, which includes the execution of a PeeV round for $N$ (to verify its outbound peers) as well as the PeeV rounds needed to confirm all inbound peers.
During a complete round, nodes exchange the following messages:
\begin{itemize}
	\item a \verb!marker! message from each monitor in $\Gamma$, which is forwarded to all outbound peers;
	\item a \verb!verified! message from each monitor in $\Gamma$;
	\item a \verb!marker! message from each inbound peer, which is forwarded to the corresponding monitor.
\end{itemize}
Therefore, in a complete PeeV round, the total number of messages exchanged by a single node is: 
$$Msg_{PeeV} {=} (|O| + 2{\cdot}|I| + 1){\cdot}|\Gamma|,$$ where $|O|$ and $|I|$ are the number of outbound and inbound peers, respectively, and $|\Gamma|$ is the number of monitors.

As, on average, Bitcoin nodes experience around one change per hour in their outbound connections~\cite{statoshi-peers}, monitors will need to run, for each node, approximately one PeeV round per hour.
Hence, each node will exchange, for each monitor, around $Msg_{PeeV}$ messages per hour.
As mentioned in Section \ref{sec:background}, connections for a node are usually limited to 8 outbound and 117 inbound.
This value can actually be diverse in real life, with most nodes never reaching the limit~\cite{decker2013information,delgado2018txprobe}, and few others exceeding this number~\cite{miller2015discovering}.
Therefore, we consider $|O|{=}8$ and $|I|{=}117$ as average values and estimate the overhead of AToM for a single node as approximately $|O| {+} 2{\cdot}|I| {+} 1 = 8 {+} 2{\cdot}117 {+} 1 = 243$ additional messages per hour, for each monitor.

If we run, for instance, 10 monitors, each node would exchange around 2430 extra messages per hour, which is less than 1 extra message per second.
Following the same reasoning, we could run as many as 50 monitors with only 3 extra messages per second for each node.
Considering the average number of messages exchanged by a Bitcoin node is around 50 per second \cite{statoshi-msg}, we can say the overhead introduced by the AToM protocol is very low.

Moreover, AToM can easily scale to larger networks.
In fact, the cost of running the AToM protocol for a monitor increases linearly in the number of nodes, with an average of only 10 extra messages per new node (1 marker to the node, 8 markers from its outbound peers, and 1 verification message), in each PeeV round.
On the other hand, the impact on a single node does not depend on the size of the network, but only on the  number of monitors and the number of peers (which is limited by the Bitcoin client).

\section{Experimental Results}
\label{sec:experiments}

To evaluate our solution, we implemented a proof of concept (PoC) and performed experiments in a simulated environment. 
In this section, we describe our implementation and show the results of our experiments.

\subsection{Implementation}
We give details about our PoC implementation of AToM and describe our simulation environment.

\paragraph{\textbf{AToM}}
We implemented AToM by modifying Bitcoin Core 0.20.
We limited the compatibility of the protocol to one node per IP address (i.e., we do not allow two nodes to run from the same IP).
This limitation is due to the fact that inbound peers are assigned a random port, making it impossible to distinguish two different nodes connecting from the same IP. 
Considering that virtually all known public nodes run from a different IP, we consider this a minor issue.

The scan frequency and timeout values have been chosen according to our local network delays.
The PeeV timeout was set to 1 second, which proved to be sufficient for a full PeeV round ($M{-}N{-}P{-}M$).
Similarly, $adjustFrequency()$ works as described in Algorithm \ref{alg:atom}, using seconds as the time unit ($f_N$ is increased by 1 second and decreased by $c$ seconds).
In particular, for each node $N$, the initial scan frequency $f_N$ was set to 5 seconds, while the minimum and maximum values were set to $f_{min}{=}1$ and $f_{max}{=}10$, respectively.
The actual PeeV round scheduling is randomized following a Poisson distribution over $f_N$.
This further spreads messages over time and makes it harder for an adversary to predict PeeV round timings.

For safety purposes, nodes enable the reputation system for a node only after 3 PeeV rounds have passed (since the connection was established) for each monitor.
This was necessary to avoid disconnecting inbound peers that are not being scanned at the same rate as the node itself.
After this safe period, nodes are immediately disconnected when not verified by the majority of peers.
Such peers are also banned, avoiding future connections to and from them.

\paragraph{\textbf{Adversary}}
We implemented malicious nodes, which consistently deviate from the protocol to hide and fake connections.
These nodes are able to recognize each other and cooperate to deceive the monitors.
In particular, malicious nodes conceal their connections by dropping all markers received from honest peers and fake connections by forwarding those received from a monitor to other malicious nodes.
When a malicious node receives a marker from a colluding peer, it forwards it to the monitor, producing a false positive in the local snapshot.
Note that such behavior represents the worst-case scenario for our protocol.

\subsection{Evaluation}
We performed three series of experiments with network variability values of 1, 5, and 10 seconds, which have been chosen to be equal to the lowest, initial, and highest scan frequency for each node ($f_N$).
In each series, we performed runs with different percentages of malicious nodes in the network, ranging from 0\% (i.e., all honest nodes) to 50\%.

Each simulation lasted 10 minutes, with monitors probed every 30 seconds.
The global snapshot $G_{AToM}$ was calculated as in Definition \ref{def:gatom} and compared to the ground-truth topology $G$.

For each experiment, we counted the number of True Positives (TP), False Negatives (FN), and False Positives (FP, and then evaluated the accuracy of AToM in terms of precision and recall, defined as follows:
$$Precision = \frac{|TP|}{|TP|+|FP|}$$
$$Recall = \frac{|TP|}{|TP|+|FN|}.$$
These metrics are typically used to evaluate topology-inferring techniques, allowing a direct comparison with our protocol.
However, it should be noted that such techniques are run from the adversarial perspective.
This means, on the one hand, that honest nodes do not participate to the protocol, and, on the other hand, that there are no malicious nodes trying to cheat.
Therefore, false negatives and false positives are caused by other factors compared to our setting, giving precision and recall values a different meaning for topology-inferring techniques.
This should be taken into account when comparing these techniques to our protocol.

\paragraph{\textbf{Simulation Environment}}
To perform the experiments, we implemented a private Bitcoin network (i.e., Regtest) using our modified client and running nodes in Docker containers and managing them through a script.

The network is composed of 50 nodes and 4 monitors (the number of malicious nodes is calculated as the percentage of the total number of nodes).
Outbound connections are opened from each monitor towards all nodes, as soon as they are created.
In turn, each node connects to 3 outbound peers, chosen uniformly at random.
Between two nodes, only one connection can be established, meaning that there are no mutual connections, nor multiple inbound connections from the same node.

To cope with the local scale of the simulation, we let changes in the network occur at a much faster rate than the real Bitcoin network.
In particular, we have network events occur at a target average frequency, referred to as \textit{network variability} (and denoted by $var$ in our results), whose value is set at the beginning of each experiment.
At each iteration a node is either added or removed, maintaining an average of 50 nodes throughout the experiment. The percentage of malicious nodes is also kept stable over time.
When a node is removed, its inbound peers are connected to another node to always keep 3 outbound connections (this emulates the behavior of nodes in the Bitcoin network).

Despite the small scale, our framework is designed to behave as close as possible to the real Bitcoin network.
As such, although experiments are needed at a larger scale, we are confident that the results we obtained fairly represent the characteristics of our protocol.

\paragraph{\textbf{Experimental results}}
We show the AToM precision and recall obtained in our simulation Figure \ref{fig:precision} and Figure \ref{fig:recall}, respectively (values are shown in Table \ref{tab:precision} and Table \ref{tab:recall}).
\begin{table}
	\parbox{.48\linewidth}{
		\centering
		\resizebox{0.48\columnwidth}{!}{%
			\begin{tabular}{|c|c|c|c|c|c|c|l|}
				\hline
				& \textbf{0\%}    & \textbf{5\%} & \textbf{10\%} & \textbf{20\%} & \textbf{30\%} & \textbf{40\%} & \textbf{50\%}   \\ \hline
				\textbf{var=10} & 100 & 100  & 98.9 & 96.9 & 79.9 & 62.9 & 47.6 \\ \hline
				\textbf{var=5}  & 100 & 100  & 97.2 & 95.7 & 66.1 & 50.8 & 47.6 \\ \hline
				\textbf{var=1}  & 100 & 99.2 & 95.6 & 84.2 & 64.8 & 51.7 & 51.3 \\ \hline
			\end{tabular}
		}
		\caption{AToM precision (values)}
		\label{tab:precision} 
	}
	\hfill
	\parbox{.48\linewidth}{
		\centering
		\resizebox{0.48\columnwidth}{!}{%
			\begin{tabular}{|c|c|c|c|c|c|c|l|}
				\hline
				& \textbf{0\%}    & \textbf{5\%} & \textbf{10\%} & \textbf{20\%} & \textbf{30\%} & \textbf{40\%} & \textbf{50\%}   \\ \hline
				\textbf{var=10} & 100  & 99.5 & 98.0 & 95.9 & 88.3 & 79.5 & 54.6 \\ \hline
				\textbf{var=5}  & 99.9 & 99.0 & 97.3 & 95.8 & 82.5 & 75.7 & 50.1 \\ \hline
				\textbf{var=1 } & 99.8 & 98.8 & 96.7 & 91.3 & 82.3 & 68.6 & 43.0 \\ \hline
			\end{tabular}
		}
		\caption{AToM recall (values)}
		\label{tab:recall}
	}
\end{table}
\begin{figure}
	\centering
	\begin{subfigure}{.5\textwidth}
		\centering
		\includegraphics[width=0.95\columnwidth]{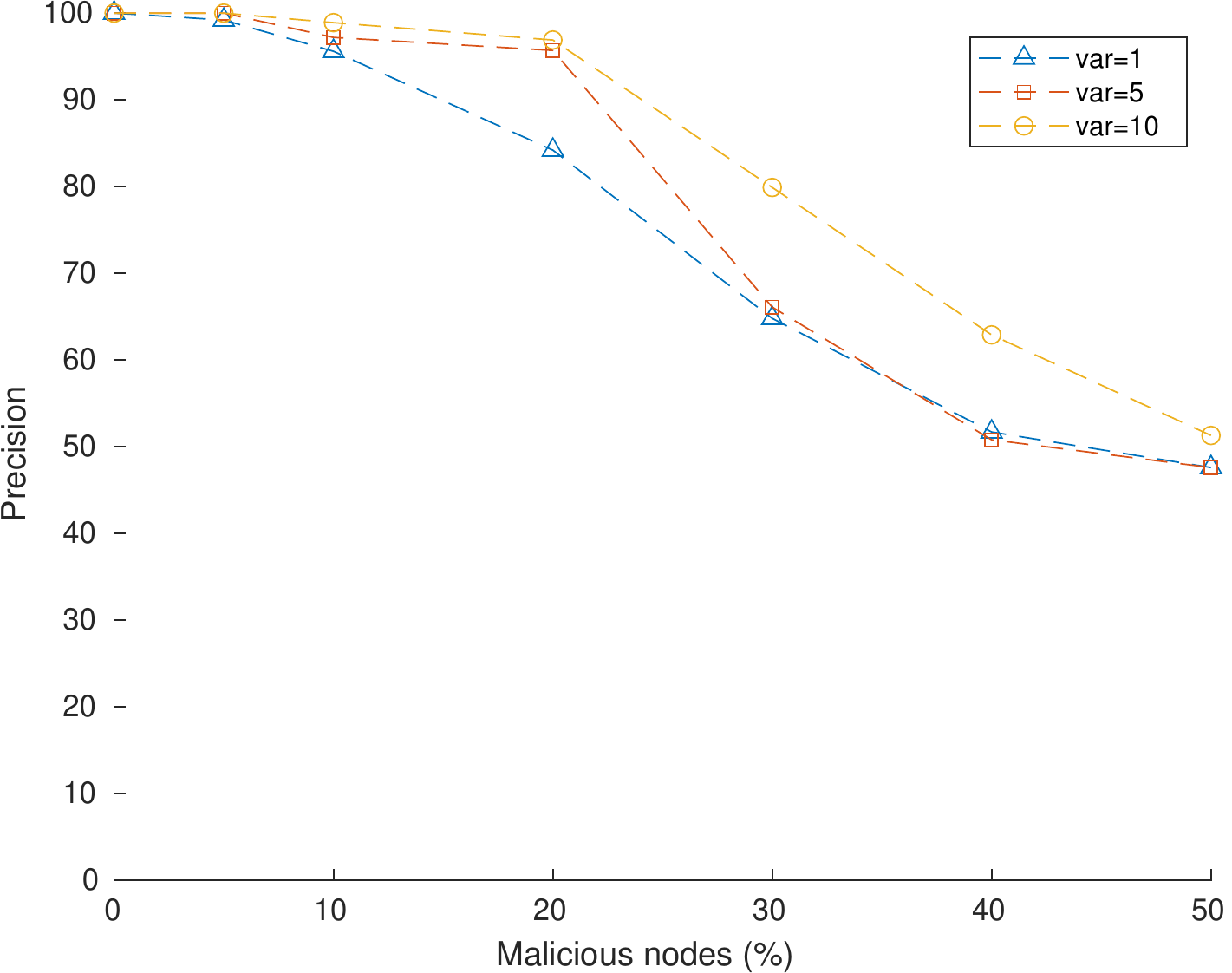}
		\caption{AToM precision.}
		\label{fig:precision}
	\end{subfigure}%
	\begin{subfigure}{.5\textwidth}
		\centering
		\includegraphics[width=0.95\columnwidth]{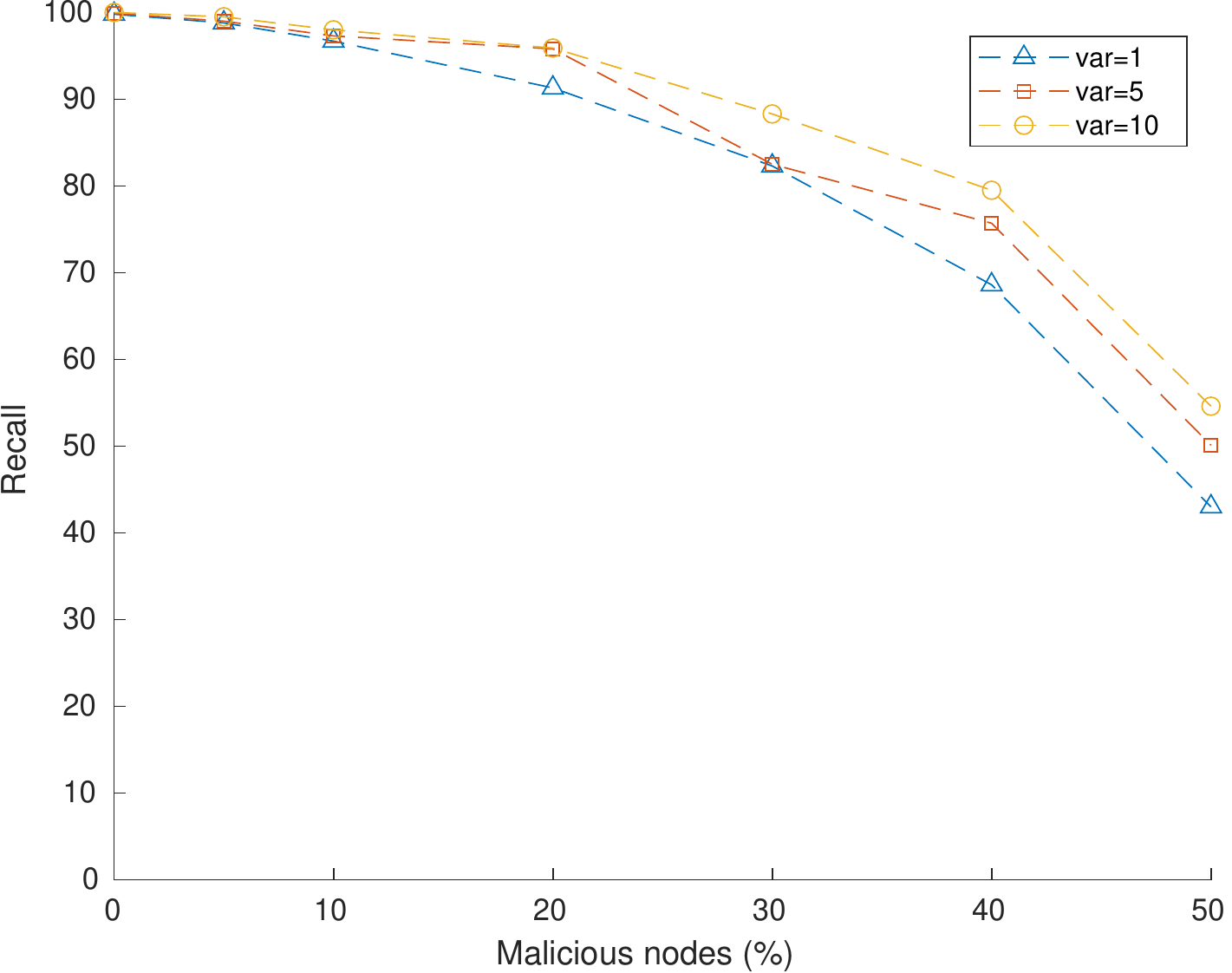}
		\caption{AToM recall.}
		\label{fig:recall}
	\end{subfigure}
	\caption{Experimental results for the AToM protocol.}
	\label{fig:test}
\vspace{-10pt}
\end{figure}

In an honest setting, both precision and recall are above 99\%, independently from the network variability.
In particular, the different network variability did not seem to produce notable differences in the results.

When malicious nodes are introduced, the number of false negatives and false positives starts to grow, lowering recall and precision, respectively.
However, recall decreases slower than precision, keeping over 90\% up to with 20\% malicious nodes and over 80\% with 30\% of malicious nodes.
This is due to the fact that colluding nodes, when connected, can only hide their own connection (thus generating a false negative), since the enforced consistency property quickly removes hidden connections with honest nodes.

On the other hand, as discussed in Section \ref{sec:design}, colluding nodes can exchange valid markers received from their inbound peers to make the monitor infer wrong connections.
This translates to as many false positives as their inbound peers.
For this reason, 
precision, while also keeping over 90\% with up to 20\% malicious nodes,
decreases more rapidly than recall when the number of colluding nodes grows.

Nevertheless, it is worth noting that controlling more than 20\% of the reachable network is very costly for the adversary, since it requires deploying thousands of nodes, located in different areas of the Internet (this is because Bitcoin clients limit the number of connections from a single subnet to reduce the risk of connecting to multiple adversarial nodes).
While controlling such a massive amount of nodes can be used to affect the accuracy of the global snapshot (but not, for instance, to cause disconnections among nodes)

Given the fact that controlling many reachable nodes can only affect the accuracy 
a similar attack could only affect the precision of the global snapshot, without being able, for instance, to cause disconnections among nodes, it is unclear whether such a motivation would be worth the cost for the adversary.

In conclusion, our results prove that AToM adapts well to the variability of the network and has high resilience against malicious nodes.
However, a massive number of colluding malicious nodes can affect its precision.
This factor should be taken into account when leveraging AToM for security purposes.

\section{Conclusion}
\label{sec:conclusion}
In this paper, we studied the problem of topology obfuscation in the Bitcoin network.
Despite the general agreement on the fact that the Bitcoin topology should be hidden for security reasons, no solid proofs have been provided in the literature to support this measure.

On the other hand, topology-inferring techniques are periodically disclosed, which allow attackers to effectively bypass this protective mechanism.
While most such techniques have been quickly made obsolete by fixes to the protocol, they show the difficulty of developers to enforce this limitation.
At the same time, researchers pointed out how the lack of topology information greatly hinders the analysis and optimization of the network, which are of paramount importance for its improvement.

We then addressed this problem by questioning the need of topology obfuscation and empirically investigating the security threats that are traditionally cited as a motivation for this approach.
We thoroughly reviewed all known network-level attacks related to or aided by knowledge of the topology.
Our analysis shows that most such attacks do not actually depend on this information or their risk has been mitigated over the years.

In light of this, we argued for an open topology, and proposed AToM, a semi-trusted monitoring system for the Bitcoin network.
Our protocol allows computing a full snapshot of the network and keeping it up-to-date over time.
We formally analyzed its correctness and showed its resilience against colluding malicious nodes.
We also experimentally evaluated its precision and recall in both trusted and untrusted settings.
Our solution has low overhead and can be implemented on any permissionless blockchain network, with virtually no modifications.

To the best of our knowledge, this is the first work to specifically address the debate on topology obfuscation and to propose a viable solution for the monitoring of the Bitcoin network.
Based on our findings, we endorse an open topology for the Bitcoin network and promote its active monitoring to help researchers and developers design a more efficient and secure cryptocurrency.

Future work includes a fully-distributed (trustless) version of the protocol, 
the application or introduction of topology-analysis techniques for the bitcoin network,
and the design of new mechanisms for improving connectivity and promptly react to security threats.

\section*{Acknowledgment}
This project is co-financed by the European Union Regional Development Fund within the framework of the ERDF Operational Program of Catalonia 2014-2020 with a grant of 50\% of total eligible cost.
\printbibliography
	
\end{document}